\def\be{\begin{equation}}
	\def\ee{\end{equation}}
\def\ba{\begin{array}}
	\def\ea{\end{array}}

\def\mathbi#1{\text{\em #1}}

\documentclass[prl,showpacs,twocolumn,amsmath]{revtex4}
\usepackage{amsmath}
\usepackage{amsfonts}
\usepackage{mathrsfs}
\usepackage{amssymb}
\usepackage{pifont}
\usepackage{epsfig,subfigure,dsfont,amsthm,amsbsy,mathrsfs,amscd}
\usepackage{epstopdf}
\usepackage{bbm}
\usepackage{color}
\def\qed{\leavevmode\unskip\penalty9999 \hbox{}\nobreak\hfill
	\quad\hbox{\leavevmode  \hbox to.77778em{%
			\hfil\vrule   \vbox to.675em%
			{\hrule width.6em\vfil\hrule}\vrule\hfil}}
	\par\vskip3pt}
\usepackage{leftidx}
\newtheorem{theorem}{Theorem}
\newtheorem{corollary}{Corollary}
\newtheorem{lemma}{Lemma}

\input amssym.def
\begin{document}
	\title{\large\bf Improving quantum battery capacity in tripartite quantum systems by local projective measurements}
	
	\author{Yiding Wang$^{1,*}$, Hui Liu$^{1,*}$, Shao-Ming Fei$^{2}$ and Tinggui Zhang$^{1,\dag}$ }
	\affiliation{$1$ School of Mathematics and Statistics, Hainan Normal University, Haikou, 571158, China \\$2$ School of Mathematical Sciences, Capital Normal University, Beijing 100048, China\\
		$^{*}$ First co-authors (These authors contributed equally to this work) \\
		$^{\dag}$ Correspondence to tinggui333@163.com}
	
	\bigskip
	\bigskip
	
	\begin{abstract}
We investigate the impact of local von Neumann measurements on quantum battery capacity in tripartite quantum systems. We propose two measurement-based protocols and introduce the concept of optimal local projective operators. Specifically, we derive explicit analytical expressions for our protocols when applied to general three-qubit $X$-states. Furthermore, we analyze the negative effects of white noise and dephasing noise on quantum battery capacity, proving that optimal local projective operators can improve the robustness of subsystem and total system capacity against both noise types for the general tripartite $X$-state. We numerically validate the performance of different schemes in capacity enhancement through detailed examples and find that these optimized operators can effectively enhance both subsystem and total system battery capacity. Our results indicate that the local von Neumann measurement is a powerful tool to enhance the battery capacity in multipartite quantum systems.
	\end{abstract}
	
	\pacs{04.70.Dy, 03.65.Ud, 04.62.+v} \maketitle
	
	\section{I. Introduction}
	As quantum devices can store and release energy in an appropriate manner \cite{phhs,cmov,akmc,mfha,omk}, quantum batteries have enormous potential to outperform the classical batteries. Due to the small size, high portability and large capacity, they can be employed in nanotechnology such as nano-chips. In recent years, quantum batteries have been widely studied in the field of quantum science.
	
	The concept of quantum batteries in an informational theoretic context was first introduced by R. Alicki and M. Fannes \cite{rm}, and characterized by the maximum amount of energy that can be extracted from a quantum system by applying unitary operations. Afterwards, quantum batteries have been extensively studied both theoretically and experimentally \cite{ffh,bfh,cpbc,sjo,fgpp,afmp,llmp,fcap,nfmh,rap,akmg,fb,ktss,gcs,gccb,jqqw,rarc,drdr,ys,xzzl,mtmj,ksus,zts,gcms,zdz,sdww,dlws,shsg,fqdy,sgas,fmaj,cmha,dzs,tscg}. The authors in \cite{cpbc} proved that quantum mechanics can contribute to an enhancement in the amount of work deposited per unit time. A classification of several models of quantum batteries was given by \cite{afmp}, involving different combinations of two-level systems and quantum harmonic oscillators. In \cite{llmp}, the authors extended the concept of quantum battery from a collection of a priori isolated systems to a many-body quantum system with intrinsic interactions. The problems of many-body interaction and energy fluctuation were also studied in \cite{rap,ys,xzzl,zts,dlws,fqdy}. Ferraro et al. used collective charging to prove that quantum batteries charge faster than ordinary batteries and proposed the first quantum battery model that could be engineered in a soild-state architecture \cite{fcap}. Barra demonstrated that a cyclic unitary process can extract work from the thermodynamic equilibrium state of an engineered quantum dissipative process \cite{fb}. In addition, some protocols have been presented to maximize charging efficiency and prevent energy losses due to the effect of environment \cite{nfmh,drdr,gccb,mtmj,jqqw,shsg}. Moreover, an important quantitative indicator of the quality for quantum batteries, i.e., the ergotropy functional, was provided by Tirone et al. \cite{tscg}.
	
	Recently, new progresses have been made in the research of quantum batteries \cite{xzzz,yyas,strs,ksm,yswz,bjbf,cbss,ssyw,cadm,zglc,masf,zmjy,zyf,risj}. Xu et al. found that increasing the two-mode coupling can enhance the charging power \cite{xzzz}. In \cite{yyas} the quantum battery capacity is defined to be the difference between the highest and the lowest energy that can be reached under the unitary evolution of the system. Furthermore, the author in \cite{strs} showed that employing entangled input states with local recovery operations will generally not improve the battery performance. Chaki et al. demonstrated that stochastically extractable energy using non-positive operator-valued measurements is no less than that using positive operator-valued measurements \cite{cbss}. In \cite{cadm}, the authors proved that when the driving is strong enough the useful work which can be extracted from the quantum battery, i.e., the ergotropy, is exactly equal to the stored energy.
	
	Quantum measurements on one subsystem may have an effect on the states of other subsystems. The idea to achieve the desired effect on the target system by performing appropriate quantum measurements on other subsystems has been applied to the research of quantum battery \cite{yj,zyf}. In \cite{zyf}, the authors showed that for bipartite systems the battery capacity with respect to one subsystem can be improved by local-projective measurements on another subsystem. In this study we explore the improvement of quantum battery capacity in tripartite systems. It is basically different from the bipartite scenario for which we have only one scheme: measure one subsystem and observe the capacity changes of the entire system and the another subsystem. However, in a tripartite system, we may either measure only one subsystem and observe the battery capacity changes of the other two, or measure two subsystems and observe the capacity of the remaining one. Exploring the effect of local-projective measurements on battery capacity in tripartite systems is an interesting problem. Consequently, the study on optimal local-projective operator that maximizes the increase of battery capacity is of significance.
	
	The rest of this paper is organized as follows. In section II, we provide the main results of this paper, including two measurement schemes which only involve local-projective measurements in tripartite systems [\mathbi{Scheme 1-2}\,], the concept of the optimal local projective operators [Theorem 1-2\,], the negative impact of noise on battery capacity, and the weakening of this negative impact by our schemes [Theorem 3-4\,]. In section III, we present two examples to explore the changes in battery capacity of the post-measurement subsystem and the whole battery system [\mathbi{Example 1-2}\,]. We summarize and discuss our conclusions in the last section.
	
	\section{II. Measurement-based protocol in tripartite system}
	
    Recently, a new quantification of quantum battery capacity was introduced \cite{yyas}, which admits a simple expression in terms of the eigenvalues $\{\lambda_i\}$ of the quantum state $\rho$ and the energy levels $\{\epsilon_i\}$ of the Hamiltonian $H=\sum_{i=0}^{d-1}\epsilon_i|\varepsilon_i\rangle\langle\varepsilon_i|$,
	\begin{equation}\label{e1}
		\begin{split}
			\mathcal{C}(\rho;H)&=\sum_{i=0}^{d-1}\epsilon_i(\lambda_i-\lambda_{d-1-i})\\
			&=\sum_{i=0}^{d-1}\lambda_i(\epsilon_i-\epsilon_{d-1-i}),
		\end{split}
	\end{equation}
	where $\{\lambda_i\}$ and $\{\epsilon_i\}$ are arranged in ascending order without losing generality, $\lambda_0\leqslant\lambda_1\leqslant...\leqslant\lambda_{d-1}$ and $\epsilon_0\leqslant\epsilon_1\leqslant...\leqslant\epsilon_{d-1}$.
	The battery capacity $\mathcal{C}(\rho;H)$ is unitarily invariant and is a Schur-covex functional of the quantum state. By definition, unitary invariance is clear, and Schur convexity implies that if a state $\rho$ is majorized by $\varrho$ $(\rho\prec\varrho)$, then we have $\mathcal{C}(\rho;H)\leq\mathcal{C}(\varrho;H)$.
	
	The authors in \cite{zyf} show that this quantum battery capacity with respect to a subsystem or the whole bipartite system can be improved by local-projective measurements on another subsystem. In this work, we focus on the tripartite scenario of improving battery capacity by local projective measurements.
	\\
	
	Let us consider the battery state $\rho_{ABC}$ in Hilbert space $\mathcal{H}_A\otimes\mathcal{H}_B\otimes\mathcal{H}_C$, with Hamiltonians $H_A=\sum_{i=0}^{d_A-1}\epsilon_i^A|\varepsilon_i\rangle^A\langle\varepsilon_i|$, $H_B=\sum_{i=0}^{d_B-1}\epsilon_i^B|\varepsilon_i\rangle^B\langle\varepsilon_i|$ and $H_C=\sum_{i=0}^{d_C-1}\epsilon_i^C|\varepsilon_i\rangle^C\langle\varepsilon_i|$ associated with systems $A$, $B$ and $C$, respectively. The Hamiltonian of the entire system is
	\begin{small}
	\begin{equation}\label{e2}
	\begin{split}
	H&=H_A\otimes I_{d_B}\otimes I_{d_C}+I_{d_A}\otimes H_B\otimes I_{d_C}+I_{d_A}\otimes I_{d_B}\otimes H_C\\
	&+J_{AB}\otimes I_{d_C}+J_{AC}\otimes I_{d_B}+J_{BC}\otimes I_{d_A}+J_{ABC},
	\end{split}
	\end{equation}
	\end{small}
	 where $I_{d_X}$ denotes the $d_X\times d_X$ identity matrix in system $X$, $J_{XY}$ represents the interaction between subsystem $X$ and $Y$, and $J_{ABC}$ represents the interaction between subsystem $A$, $B$ and $C$. Now we propose the first measurement scheme.
	 \\
	
	 \mathbi{Scheme 1}. In this strategy, we research how local-projective measurements on subsystem $C$ would affect the amount of capacity of the subsystem $AB$ or the whole tripartite battery system.
	
	 From (\ref{e1}), the quantum battery capacity with respect to system $AB$ is given by
	 \begin{equation}\label{e3}
	 	\mathcal{C}(\rho_{AB};H_{AB})=\sum_{i=0}^{d_Ad_B-1}\epsilon_i^{AB}
	 	(\lambda_i^{AB}-\lambda_{d_Ad_B-1-i}^{AB}),
	 \end{equation}
	 where $H_{AB}=H_A\otimes I_{d_B}+I_{d_A}\otimes H_B+J_{AB}$ is the Hamiltonian of the subsystem $AB$, $\{\epsilon_i^{AB}\}$ and $\{\lambda_i^{AB}\}$ are the energy levels of $H_{AB}$ and the eigenvalues of the reduced state $\rho_{AB}$, respectively. Let $\{W_k\}_{k=0}^{d_C-1}$ be a rank-1 von Neumann measurement on the subsystem $C$. Conditioned on the measurement outcome, say $k$, the post-measurement state of $\rho_{ABC}$ becomes
	 \begin{equation}\label{e4}
	 	\rho_{ABC}^k=\frac{1}{P_k}(I_{d_A}\otimes I_{d_B}\otimes W_k)\rho_{ABC}(I_{d_A}\otimes I_{d_B}\otimes W_k)^\dagger,
	 \end{equation}
	 where the probability
	 $$
	 P_k=\text{Tr}[(I_{d_A}\otimes I_{d_B}\otimes W_k)\rho_{ABC}(I_{d_A}\otimes I_{d_B}\otimes W_k)^\dagger]
	 $$
	 with $k=0,1,...,d_C-1$. In fact, the final state depends on the measurement basis and the outcomes. Next, we can calculate the capacity of $\rho_{AB}^k$ according to the measurement outcome $k$,
	 \begin{equation}\label{e5}
	 \mathcal{C}(\rho_{AB}^k;H_{AB})=\sum_{i=0}^{d_Ad_B-1}\epsilon_i^{AB}
	 (\xi_i^{AB}-\xi_{d_Ad_B-1-i}^{AB}),
	 \end{equation}
	 where $\{\xi_i^{AB}\}$ are the eigenvalues of the reduced state $\rho_{AB}^k$.
	
	 Now, we consider how to choose a suitable local-projective operator that maximizes the battery capacity of subsystem $AB$ in the final state.  A local-projection operator is called optimal local-projective operator if it improves the battery capacity of the subsystem more than any other projection operator. From now we restrict all measurement bases in this work to the computational basis. Based on measurement scheme 1, our results are the following.
	 \begin{theorem}
	 	If the state $\rho_{AB}^i\,(i=1,\dots,d_C)$ is majorized by $\rho_{AB}^k$, then the local measurement $I_{d_A}\otimes I_{d_B}\otimes W_k$ corresponding to the state $\rho_{ABC}^k$ is the optimal local-projective operator, where $k\in \{1,2,\cdots,d_C\}$.
	 \end{theorem}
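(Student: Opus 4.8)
\emph{Proof plan.} The argument rests on the two structural properties of the capacity functional recalled right after (\ref{e1}): unitary invariance and, decisively, Schur-convexity — $\rho\prec\varrho\Rightarrow\mathcal{C}(\rho;H)\le\mathcal{C}(\varrho;H)$. The plan is to convert the majorization hypothesis into an inequality between the subsystem capacities associated with the various measurement outcomes, and then to read optimality directly off the definition of an optimal local-projective operator.

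First I would pin down the setup forced by \mathbi{Scheme 1} together with the standing convention that every measurement basis is the computational basis: the admissible local-projective operators acting on $C$ are exactly the $I_{d_A}\otimes I_{d_B}\otimes W_j$, where the $W_j$ are the rank-one computational-basis projectors, and conditioning on outcome $j$ yields through (\ref{e4}) the state $\rho_{ABC}^j$ and hence the reduced state $\rho_{AB}^j=\mathrm{Tr}_C\rho_{ABC}^j$, whose capacity is given by (\ref{e5}). All of these reduced states live on $\mathcal{H}_A\otimes\mathcal{H}_B$, so their sorted eigenvalue vectors all have length $d_Ad_B$ and majorization among them is well defined; this is what makes the hypothesis ``$\rho_{AB}^i$ is majorized by $\rho_{AB}^k$'' meaningful for each outcome $i$.

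The core step is then immediate. Fix any outcome $i$. By hypothesis $\rho_{AB}^i\prec\rho_{AB}^k$, and applying Schur-convexity of $\mathcal{C}(\cdot\,;H_{AB})$ gives $\mathcal{C}(\rho_{AB}^i;H_{AB})\le\mathcal{C}(\rho_{AB}^k;H_{AB})$. Since $i$ was arbitrary, the operator $I_{d_A}\otimes I_{d_B}\otimes W_k$ attains the maximum of $j\mapsto\mathcal{C}(\rho_{AB}^j;H_{AB})$ over the admissible family, which is precisely the defining property of an optimal local-projective operator for subsystem $AB$ under \mathbi{Scheme 1}.

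The content is genuinely light once Schur-convexity is taken as given, so the obstacle is interpretational rather than computational. One must read ``any other projection operator'' in the definition as any other $W_j$ in the fixed computational basis, so that the comparison set is exactly $\{\rho_{AB}^j\}_j$; and one should be clear that the theorem asserts optimality \emph{within} the scheme, i.e.\ the best outcome, and does not by itself claim $\mathcal{C}(\rho_{AB}^k;H_{AB})\ge\mathcal{C}(\rho_{AB};H_{AB})$ — the latter would additionally require the relation $\rho_{AB}=\sum_jP_j\rho_{AB}^j$ and a separate convexity-type argument, which is not needed here. If one preferred not to invoke Schur-convexity as a black box, the single nontrivial point would be re-deriving $\sum_n\delta_n\,\xi^{(i)}_n\le\sum_n\delta_n\,\xi^{(k)}_n$ from $\vec\xi^{(i)}\prec\vec\xi^{(k)}$, where $\xi^{(j)}_n$ are the ascending eigenvalues of $\rho_{AB}^j$ and $\delta_n:=\epsilon_n^{AB}-\epsilon_{d_Ad_B-1-n}^{AB}$ is non-decreasing in $n$ with $\sum_n\delta_n=0$; this follows from Abel summation combined with the domination of partial sums under majorization, and is the expected (mild) obstacle.
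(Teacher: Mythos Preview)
Your proposal is correct and follows essentially the same approach as the paper: invoke the Schur-convexity of $\mathcal{C}(\cdot\,;H_{AB})$ to turn the hypothesis $\rho_{AB}^i\prec\rho_{AB}^k$ into $\mathcal{C}(\rho_{AB}^i;H_{AB})\le\mathcal{C}(\rho_{AB}^k;H_{AB})$ for every $i$, which is exactly the definition of optimality. The paper's own proof is a two-line appeal to Schur-convexity; your additional remarks on the interpretation of the comparison set and the Abel-summation re-derivation are sound but go beyond what the paper records.
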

	
	 \begin{proof}
	 	From Ref.\cite{yyas}, the battery capacity is a Schur-convex functional of the state. So according to $\rho_{AB}^i\,(i=1,\dots,d_C)$ is majorized by $\rho_{AB}^k$, we can obtain that
	 	$$
	 	\mathcal{C}(\rho_{AB}^i;H_{AB})\leq\mathcal{C}(\rho_{AB}^k;H_{AB}),\,i=1,\dots,d_C.
	 	$$
	 \end{proof}
	 In summary, the target subsystem's battery capacity obtained via optimal local projective operator. We use the optimal projection operator to implement scheme 1 as shown in Figure 1.
	 \begin{figure}[htbp]
	 	\centering
	 	\includegraphics[width=0.5\textwidth]{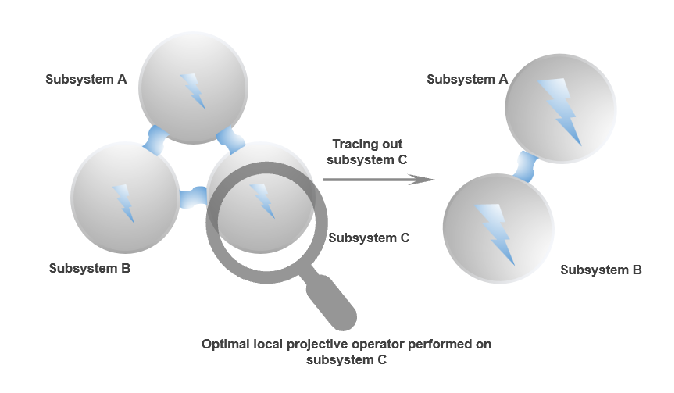}
	 	\vspace{-2em} \caption{The idea of tripartite measurement-based protocol for \mathbi{Scheme 1} is shown in the Figure. We measure the $C$ system locally and observe its influence on the battery capacity of the whole or the remaining systems $AB$.} \label{Fig.1}
	 \end{figure}
	
	 In scheme 1, the local-projective measurement on one subsystem is used. Thus it is a natural idea to consider the local-projective measurement on two subsystems.
	 \\
	
	 \mathbi{Scheme 2}. Here we consider the variations of the battery capacities of the subsystem $A$ and the whole tripartite system under the von Neumann measurements on the subsystem $BC$.
	
	 For the initial state $\rho_{ABC}$, the battery capacity with respect to subsystem $A$ is given by
	 \begin{equation}\label{e6} \mathcal{C}(\rho_{A};H_{A})=\sum_{i=0}^{d_A-1}\epsilon_i^{A}(\lambda_i^{A}-\lambda_{d_A-1-i}^{A}).
	 \end{equation}
	 After the von Neumann measurement $\{V_k\}_{k=0}^{d_Bd_C-1}$ on the subsystem $BC$, with respect to the $k$th measurement outcome the state $\rho_{ABC}$ becomes
	 \begin{equation}\label{e7}
	 	\rho_{ABC}^k=\frac{1}{P_k}(I_{d_A}\otimes V_k)\rho_{ABC}(I_{d_A}\otimes V_k)^\dagger,
	 \end{equation}
	 where the probability
	 $$
	 P_k=\text{Tr}[(I_{d_A}\otimes V_k)\rho_{ABC}(I_{d_A}\otimes V_k)^\dagger]
	 $$
	 with $k=0,1,...,d_Bd_C-1$. Then, we can calculate the capacity of $\rho_{A}^k$ corresponding to the measurement outcome $k$,
	 \begin{equation}\label{e8}
	 	\mathcal{C}(\rho_{A}^k;H_{A})=\sum_{i=0}^{d_A-1}\epsilon_i^{A}
	 	(\xi_i^{A}-\xi_{d_A-1-i}^{A}),
	 \end{equation}
	 where $\{\xi_i^{A}\}$ are the eigenvalues of the reduced state $\rho_{A}^k$. Similar to scheme 1, we can also find a suitable local-projective operator that maximizes the battery capacity of subsystem $A$ in the final state.
	 \begin{theorem}
	 	The operator $I_{d_A}\otimes V_k$ corresponding to the state $\rho_{ABC}^k$ is the optimal local-projective operator to improve the battery capacity of subsystem $A$, if the state $\rho_{A}^i\,(i=1,\dots,d_Bd_C)$ is majorized by $\rho_{A}^k$.
	 \end{theorem}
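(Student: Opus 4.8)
The plan is to obtain Theorem 2 as the Scheme-2 analogue of Theorem 1, the only substantive ingredient being the Schur convexity of the capacity functional $\mathcal{C}(\,\cdot\,;H_A)$ established in Ref.~\cite{yyas}. First I would unwind the definition of optimality: a local projective operator is optimal exactly when the post-measurement reduced state it produces on $\mathcal{H}_A$ has capacity no smaller than the one produced by any competing projective operator. Since all measurement bases have been fixed to the computational basis, the competing operators are precisely the rank-1 projectors $\{V_j\}_j$ on $\mathcal{H}_B\otimes\mathcal{H}_C$, and the associated outcome states are the $\rho_A^j=\text{Tr}_{BC}\,\rho_{ABC}^j$ for $j=1,\dots,d_Bd_C$. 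Hence it suffices to prove $\mathcal{C}(\rho_A^j;H_A)\le\mathcal{C}(\rho_A^k;H_A)$ for every such $j$.

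Next I would invoke the hypothesis that $\rho_A^j$ is majorized by $\rho_A^k$, i.e.\ $\rho_A^j\prec\rho_A^k$, and combine it with Schur convexity exactly as in the proof of Theorem 1: majorization of one density matrix by another forces the corresponding inequality of the (Schur-convex) capacities, so $\mathcal{C}(\rho_A^j;H_A)\le\mathcal{C}(\rho_A^k;H_A)$ for all $j\in\{1,\dots,d_Bd_C\}$. The only change from Scheme 1 is that the argument now lives at the level of the single-party reduction on $\mathcal{H}_A$, with eigenvalues $\{\xi_i^A\}$ entering \eqref{e8}, rather than the bipartite reduction on $\mathcal{H}_A\otimes\mathcal{H}_B$; the dimension carrying the measurement is $d_Bd_C$ in place of $d_C$, and the unmeasured subsystem is $A$ in place of $AB$. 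Since the maximum of $\{\mathcal{C}(\rho_A^j;H_A)\}_j$ is attained at $j=k$, the operator $I_{d_A}\otimes V_k$ is optimal by definition.

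I do not anticipate a genuine obstacle here — the statement is a direct corollary of the Schur-convexity property quoted from \cite{yyas}. The one point worth stating carefully is the bookkeeping: one must check that the states appearing in the majorization hypothesis are literally the states $\rho_A^k=\text{Tr}_{BC}\,\rho_{ABC}^k$ whose eigenvalues $\{\xi_i^A\}$ feed into \eqref{e8}, so that the capacity inequality chain closes without gap. (Note that the theorem asserts optimality \emph{among} projective operators, not an unconditional increase over the pre-measurement capacity $\mathcal{C}(\rho_A;H_A)$; the latter comparison is a separate matter and is not needed for this statement.)
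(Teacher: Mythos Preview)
Your proposal is correct and mirrors the paper's own treatment: the paper presents Theorem~2 as the direct Scheme-2 analogue of Theorem~1 (no separate proof is written out), and the only ingredient is the Schur convexity of $\mathcal{C}(\cdot;H_A)$ from Ref.~\cite{yyas} applied to the majorization hypothesis $\rho_A^i\prec\rho_A^k$. Your remark that the theorem asserts optimality among the projective outcomes rather than an increase over the pre-measurement capacity is also accurate and matches the paper's intent.
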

	
      The implementation scheme 2 using the optimal projection operator is shown in Figure 2.
	 \begin{figure}[htbp]
	 	\centering
	 	\includegraphics[width=0.45\textwidth]{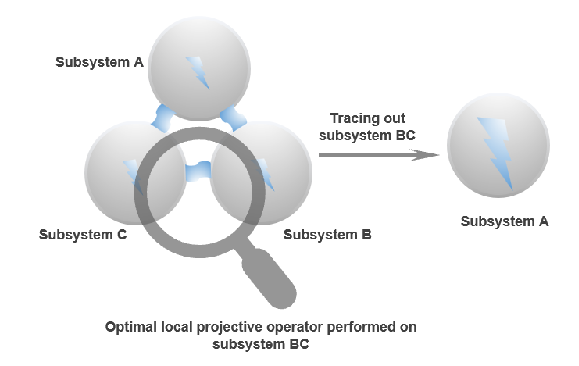}
	 	\vspace{-1em} \caption{The measurement-based protocol in \mathbi{Scheme 2}: we measure the $BC$ system locally to observe its influence on the battery capacity of the whole and the subsystem $A$.} \label{Fig.2}
	 \end{figure}
	
	 Our measurement-based protocol has specific analytical expressions for the general three qubits $X$ state in the computational basis, see Appendix A for details. In practical environments, some key operations and measurements on quantum systems rely on classical instrumentation, which inevitably introduce white noise. In fact, it can be proved that white noise also has a negative impact on the battery capacity of general state.
	 \begin{theorem}
	 	White noise can reduce the capacity of the $n$-dimensional battery state $\rho$, that is, $\mathcal{C}(\rho;H)\geq\mathcal{C}((1-f)\rho+f\mathbbm{1}/n;H)\,(f\in[0,1])$, and $\mathcal{C}((1-f)\rho+f\mathbbm{1}/n;H)$ is the monotonic decreasing functional of the noise intensity $f$.
	 \end{theorem}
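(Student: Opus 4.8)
The plan is to compute $\mathcal{C}\big((1-f)\rho+f\mathbbm{1}/n;H\big)$ in closed form using \eqref{e1} and then read off both assertions at once. Write the ordered eigenvalues of $\rho$ as $\lambda_0\le\dots\le\lambda_{n-1}$. Since $\rho$ and $\mathbbm{1}/n$ commute, the eigenvalues of $\sigma_f:=(1-f)\rho+f\mathbbm{1}/n$ are exactly $(1-f)\lambda_i+f/n$; and because $t\mapsto(1-f)t+f/n$ is non-decreasing on $\Rb$ for every $f\in[0,1]$, these numbers are already in ascending order, so I may substitute them directly into \eqref{e1}. Linearity of that formula then gives $\mathcal{C}(\sigma_f;H)=(1-f)\sum_{i}\lambda_i(\epsilon_i-\epsilon_{n-1-i})+\tfrac{f}{n}\sum_{i}(\epsilon_i-\epsilon_{n-1-i})$, and the key point is that the second sum vanishes under the reindexing $i\mapsto n-1-i$ (each $\epsilon_j$ occurs once with each sign). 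Hence $\mathcal{C}(\sigma_f;H)=(1-f)\,\mathcal{C}(\rho;H)$.

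It then remains only to note that $\mathcal{C}(\rho;H)\ge 0$. I would obtain this from \eqref{e1} written as $\sum_i\epsilon_i\lambda_i-\sum_i\epsilon_i\lambda_{n-1-i}$: by the rearrangement inequality, among all pairings $\sum_i\epsilon_i\lambda_{\pi(i)}$ the like-sorted one is the maximum and the oppositely-sorted one the minimum, so the difference is non-negative. (Equivalently, $\mathcal{C}(\rho;H)$ is the energy spread $E_{\max}-E_{\min}$ over the unitary orbit of $\rho$, manifestly $\ge 0$.) With $\mathcal{C}(\rho;H)\ge 0$ fixed, the function $f\mapsto(1-f)\mathcal{C}(\rho;H)$ is non-increasing on $[0,1]$ and attains its maximum $\mathcal{C}(\rho;H)$ at $f=0$, which is precisely the claimed inequality together with monotonicity in $f$.

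As a consistency check I would also record the purely majorization-theoretic route, closer in spirit to Theorems~1--2: the permutohedron $\{x:x\prec(\lambda_0,\dots,\lambda_{n-1})\}$ is convex and contains both $(\lambda_i)_i$ and its average $(1/n,\dots,1/n)$, so the spectrum of $\sigma_f$, being a convex combination of these two, lies in it; thus $\sigma_f\prec\rho$, and Schur-convexity of $\mathcal{C}$ yields $\mathcal{C}(\sigma_f;H)\le\mathcal{C}(\rho;H)$. Monotonicity then follows by applying the same statement to $\sigma_{f_2}=\tfrac{1-f_2}{1-f_1}\,\sigma_{f_1}+\tfrac{f_2-f_1}{1-f_1}\,(\mathbbm{1}/n)$ for $0\le f_1<f_2\le 1$, which exhibits $\sigma_{f_2}$ as white noise added to $\sigma_{f_1}$, hence $\sigma_{f_2}\prec\sigma_{f_1}$. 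I do not anticipate a genuine obstacle here: the only care needed is the degenerate endpoint $f=1$, where $\sigma_1=\mathbbm{1}/n$ and $\mathcal{C}=0$ trivially, and the entire substance of the argument is the identity $\sum_{i=0}^{n-1}(\epsilon_i-\epsilon_{n-1-i})=0$ together with non-negativity of the capacity.
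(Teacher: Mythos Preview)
Your proof is correct. The key identity $\mathcal{C}\big((1-f)\rho+f\mathbbm{1}/n;H\big)=(1-f)\,\mathcal{C}(\rho;H)$ that you establish explicitly is exactly what the paper uses (implicitly) in its monotonicity argument, where it writes the capacity of the noisy state as $\sum_i\epsilon_i(1-f)(\lambda_i-\lambda_{n-1-i})$ and differentiates. The difference is organizational: the paper proves the inequality $\mathcal{C}(\rho;H)\ge\mathcal{C}(\sigma_f;H)$ separately, by checking the partial-sum conditions for $\rho\succ\sigma_f$ directly and then invoking Schur-convexity, and only afterwards computes the derivative for monotonicity. You instead extract the closed form once and read off both conclusions simultaneously, supplying the rearrangement-inequality justification for $\mathcal{C}(\rho;H)\ge 0$ that the paper takes for granted. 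Your secondary ``consistency check'' via the permutohedron is a more abstract packaging of the paper's majorization computation. Net effect: same ingredients, but your route is more economical since the closed-form identity already contains the inequality and renders the separate majorization step redundant.
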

	 See Appendix B for the proof details of Theorem 3. From Theorem 3, we can conclude that white noise can weaken the battery capacity of any quantum state. Given a battery state $\rho$ with Hamiltonian $H$, we can regard $\mathcal{C}(\rho;H)$ as the rated battery capacity of $\rho$, and define the capacity loss rate $r_l$ as the ratio of the battery capacity reduced by white noise to the rated battery capacity, i.e.
	 \begin{small}
	 	\begin{equation}\label{e11}
	 		r_l=\frac{\mathcal{C}(\rho;H)-\mathcal{C}((1-f)\rho+f\mathbbm{1}/n;H)}{\mathcal{C}(\rho;H)},
	 	\end{equation}
	 \end{small}
	 where $f\in[0,1]$ is the noise intensity. Furthermore, the dephasing channel represents another fundamental noise model in quantum information theory \cite{bypb}, primarily characterizing the loss of phase information in qubits during transmission or storage. Given the fundamental role of $X$-states (including Bell states, Werner states, and Greenberger-Horne-Zeilinger (GHZ) states as special cases) in quantum information, a critical and interesting question is whether our protocol can enhance their battery capacity robustness against both white noise and dephasing channels. We can obtain the following fact, see Appendix C for proof.
	 \begin{theorem}
	 For any tripartite $X$-state, if the optimal local-projective operator can be identified within our protocol, the target subsystem's and whole system's capacity robustness against the white noise can be improved. For dephasing noise, our optimal local projective operators enable complete robustness for the target subsystem's and the whole system's capacity.
	 \end{theorem}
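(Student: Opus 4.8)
The plan is to split Theorem 4 into its white-noise half and its dephasing half, each resting on one elementary observation. The first is quantitative: by the eigenvalue expression (\ref{e1}), and because the map $\lambda\mapsto(1-f)\lambda+f/n$ with $f\in[0,1]$ preserves the ascending order of the eigenvalues, for every state $\sigma$ of dimension $n$ one has
\begin{equation}\label{eWN}
\mathcal{C}\big((1-f)\sigma+f\mathbbm{1}/n;H\big)=(1-f)\,\mathcal{C}(\sigma;H),
\end{equation}
since the added $f/n$ cancels in each difference $\lambda_i-\lambda_{d-1-i}$; this is the quantitative refinement of Theorem 3. The second observation is structural: for a tripartite $X$-state the only nonzero matrix elements sit on the main diagonal and on the anti-diagonal (complementary computational bit strings), so a rank-$1$ computational-basis projection on $C$, or on $BC$, forces the surviving row and column indices to agree on the measured bit(s) and therefore annihilates every anti-diagonal element; hence the post-measurement state $\rho_{ABC}^k$, and with it the reductions $\rho_{AB}^k$ and $\rho_A^k$, are \emph{diagonal} in the computational basis.

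For the white-noise claim I would proceed as follows, writing it for \textbf{Scheme 1} (for \textbf{Scheme 2} replace $C$ by $BC$ and $AB$ by $A$). Dephasing the measured subsystem leaves the complementary reduced state unchanged, so $\rho_{AB}=\sum_i P_i\,\rho_{AB}^i$ exactly; combined with the hypothesis of Theorem 1 that the identified optimal branch $k$ majorizes all branches, the convexity of partial sums of ordered eigenvalues (Ky Fan) gives $\rho_{AB}\prec\rho_{AB}^k$, whence Schur-convexity of $\mathcal{C}$ yields $\mathcal{C}(\rho_{AB};H_{AB})\le\mathcal{C}(\rho_{AB}^k;H_{AB})$; for the whole system the corresponding inequality $\mathcal{C}(\rho_{ABC};H)\le\mathcal{C}(\rho_{ABC}^k;H)$ follows from the same majorization hypothesis applied to the whole-system branches, which is part of what the theorem's clause on identifying the optimal operator provides. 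Applying (\ref{eWN}) to both sides of each inequality then gives, at every noise strength $f$,
\begin{equation}\label{eRob}
\mathcal{C}\big((1-f)\rho_{AB}+f\mathbbm{1}/d_Ad_B;H_{AB}\big)\le\mathcal{C}\big((1-f)\rho_{AB}^k+f\mathbbm{1}/d_Ad_B;H_{AB}\big),
\end{equation}
and likewise for the whole system; this is the asserted robustness improvement — under the same white noise the protocol keeps the (sub)system capacity strictly above its unassisted value, equivalently a fixed capacity threshold survives up to a strictly larger $f$. (The loss rate $r_l$ of (\ref{e11}) equals $f$ for both states, so the entire gain is in the retained amount.)

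For the dephasing claim I would invoke the second observation directly: after the optimal (computational-basis) projection the states $\rho_{AB}^k$, $\rho_A^k$ and $\rho_{ABC}^k$ are diagonal in the computational basis, and the dephasing channel acts only on off-diagonal elements in that basis, so it leaves each of these states — hence each of the corresponding capacities — exactly invariant for any dephasing strength; the loss rate is identically zero, i.e.\ complete robustness of both the target subsystem's and the whole system's capacity. One also notes this is insensitive to the order of operations, since dephasing commutes with the computational-basis projection and fixes the diagonal.

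I expect the dephasing half to be immediate once the ``$X$-state $\to$ diagonal under a computational-basis projection'' observation is in place; the main obstacle is the precise reading of ``robustness improvement'' for white noise. If the white noise is instead applied \emph{before} the measurement, the branch probability renormalizes to $\tilde P_k=(1-f)P_k+f\,d_Ad_B/n$ and the retained capacity acquires the factor $P_k/\tilde P_k$, and one must check this does not overturn $\mathcal{C}(\rho_{AB}^k)\ge\mathcal{C}(\rho_{AB})$; the clean statement — equal loss rate $f$, strictly larger retained capacity — is the one obtained when each rated state is put through the same white-noise channel, which is the natural interpretation here. A secondary, routine point is to record the whole-system analogue of the majorization hypothesis of Theorems 1--2, which the clause ``if the optimal local-projective operator can be identified'' is meant to supply.
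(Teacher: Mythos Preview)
Your treatment of the target-\emph{subsystem} part and of the dephasing part is correct and, for white noise, arguably cleaner than the paper's: where the paper proves $\rho_{AB}^k\succ\rho_{AB}$ by writing out the diagonal reduced $X$-state and invoking its Lemma~1 on majorization of ordered arrays under convex combination, you obtain the same conclusion from the model-independent identity $\rho_{AB}=\sum_iP_i\rho_{AB}^i$ together with Ky Fan and the hypothesis of Theorem~1. Your scalar identity $\mathcal{C}((1-f)\sigma+f\mathbbm{1}/n;H)=(1-f)\mathcal{C}(\sigma;H)$ then transfers the inequality to every noise strength, just as the paper does (in slightly different notation) after establishing majorization. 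The dephasing argument---post-measurement $X$-states are diagonal in the computational basis, hence fixed by the dephasing channel---is exactly the paper's.

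There is, however, a genuine gap in the \emph{whole-system} white-noise claim. You write that $\mathcal{C}(\rho_{ABC};H)\le\mathcal{C}(\rho_{ABC}^k;H)$ ``follows from the same majorization hypothesis applied to the whole-system branches, which is part of what the theorem's clause on identifying the optimal operator provides,'' and later call recording this analogue ``routine.'' It is neither part of the hypothesis nor routine. In the paper, ``optimal local-projective operator'' is defined (Theorems~1--2) solely through majorization of the \emph{reduced} post-measurement states; no whole-system majorization is assumed. The paper then devotes the bulk of Appendix~C to \emph{proving} $\rho^1\succ\rho$ for the full tripartite $X$-state: it computes the eigenvalues of a general $d$-dimensional $X$-state, bounds each by the sum of the corresponding pair of diagonal entries, and uses the reduced-state optimality $N_1\succ N_l$ to compare partial sums. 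Your Ky Fan route cannot substitute here, because for the whole system one does \emph{not} have $\rho_{ABC}=\sum_kP_k\rho_{ABC}^k$; the right-hand side is the $C$-dephased state, which is majorized by $\rho_{ABC}$---the inequality runs the wrong way. Without the paper's $X$-state eigenvalue argument (or an equivalent), the whole-system robustness improvement under white noise is unproved in your proposal.
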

	
	\section{III. Applications of measurement-based protocol}
	In this section, we use specific examples to verify the effectiveness of our two measurement-based protocols.
	
	 \mathbi{Example 1.} Consider the W state \cite{xbzk,vngaa,tttw,xywyc} mixed with white noise,
	 $$
	 \rho_a=(1-a)|W\rangle\langle W|+\frac{a}{8}\mathbbm{1},
	 $$
	 where $|W\rangle=1/\sqrt{3}(|001\rangle+|010\rangle+|100\rangle)$, $a\in[0,1]$, and $\mathbbm{1}$ denotes the $8\times8$ identity matrix. We take the Hamiltonian of the whole system as
	 \begin{small}
	 	\begin{equation}\label{e9}
	 		\begin{split}
	 			H&=\epsilon^A\sigma_3\otimes I_2\otimes I_2+\epsilon^BI_2\otimes\sigma_3\otimes I_2+\epsilon^CI_2\otimes I_2\otimes \sigma_3\\
	 			&+J_{AB}\sigma_1\otimes\sigma_1\otimes I_2+J_{AC}\sigma_1\otimes I_2\otimes\sigma_1+J_{BC}I_2\otimes\sigma_1\otimes\sigma_1,\\
	 		\end{split}
	 	\end{equation}
	 \end{small}
	 and the Hamiltonian with respect to subsystem $AB$ is
	 \begin{small}
	 \begin{equation}\label{e10}
	 H_{AB}=\epsilon^A\sigma_3\otimes I_2+\epsilon^BI_2\otimes\sigma_3+J_{AB}\sigma_1\otimes\sigma_1,
	 \end{equation}
	 \end{small}
	 where $\epsilon^A\geq\epsilon^B\geq\epsilon^C>0$, $\sigma_i$ $(i=1,2,3)$ are the standard Pauli operators, $I_2$ is the $2\times2$ identity matrix, and $J_{XY}\geq0\,(XY=AB,BC,AC)$ are the interaction parameters between subsystems. We use $\{\epsilon_i\}_{i=0}^{7}$ and $\{\epsilon_i^{AB}\}_{i=0}^3$ to represent the eigenvalues of $H$ and $H_{AB}$, respectively, which depend on $\epsilon^A$, $\epsilon^B$, $\epsilon^C$ and the interaction parameters between subsystems. It can be calculate that the eigenvalues of $\rho_a$ are $\lambda_i=a/8\,(i=0,\dots,6),\lambda_7=1-7a/8$, and the eigenvalues of $\rho_a^{AB}$ are $\lambda_i^{AB}=a/4\,(i=0,1), \lambda_2^{AB}=(4-a)/12,\,\lambda_3^{AB}=(8-5a)/12$. So the initial battery capacity of the whole system and subsystems are
	 \begin{small}
	 \begin{equation*}
	 \begin{split}
	 &\mathcal{C}(\rho_a;H)=(1-a)\epsilon_7+(a-1)\epsilon_0,\\
	 &\mathcal{C}(\rho_a^{AB};H_{AB})=\frac{2-2a}{3}(\epsilon_3^{AB}-\epsilon_0^{AB})+\frac{1-a}{3}(\epsilon_2^{AB}-\epsilon_1^{AB}),\\
	 &\mathcal{C}(\rho_a^A;H_A)=\frac{2-2a}{3}\epsilon^A.
	 \end{split}
	 \end{equation*}
	 \end{small}
	
	 Now we implement the above two measurement schemes to investigate the impact of local projective measurement on the battery capacity of the subsystem or the entire system. For the convenience of comparison, in the following numerical calculations we set $\epsilon^A=0.5$, $\epsilon^B=0.3$ and $\epsilon^C=0.1$, and default $J_{XY}=0.1\,(XY=AB,BC,AC)$ without special declaration.
	
	 For scheme 1, we consider the projective measurement $\{W_k=|k\rangle\langle k|\}_{k=0}^1$ on subsystem $C$. From Eq.\,(\ref{e4}), one can obtain $\rho_a^{0,AB}$ and $\rho_a^{1,AB}$. Note that $\rho_a^{0,AB}\succ\rho_a^{1,AB}$, so the optimal local projective operator is $I_2\otimes I_2\otimes W_0$ in scheme 1 according to Theorem 1. When we use the optimal projection operator to implement scheme 1, it can be calculated that  $\mathcal{C}(\rho_a^{0,AB};H_{AB})\geq\mathcal{C}(\rho_a^{AB};H_{AB})$ and $\mathcal{C}(\rho_a^0;H)\geq\mathcal{C}(\rho_a;H)$, which means that the battery capacities of the subsystem $AB$ and the whole system have been improved. When the noise intensity reaches the maximum, the quantum state degenerates to the maximally mixed state, and the whole and the subsystems' battery capacity are all $0$. It is worth noting that our scheme can not improve the battery capacity of the subsystem when $a=1$, but it can improve the battery capacity of the entire system, see Figure 3.
	 \begin{figure}[htbp]
	 	\centering
	 	\includegraphics[width=0.5\textwidth]{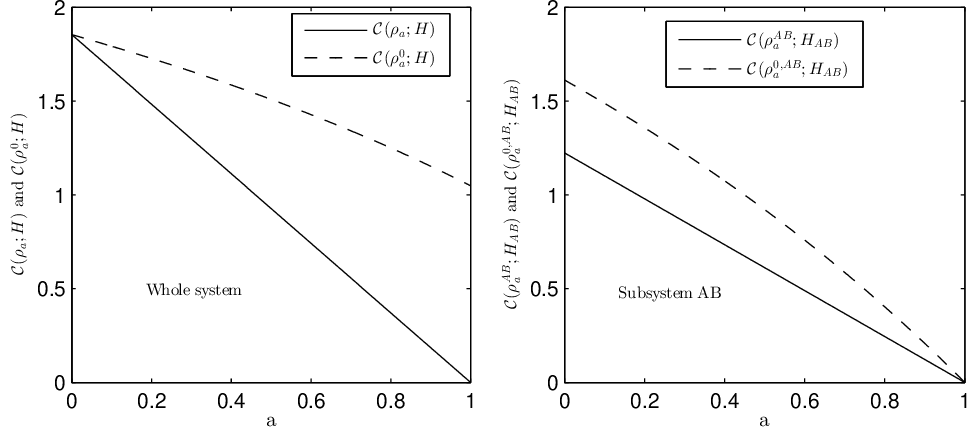}
	 	\vspace{-2em} \caption{Under fixed parameters $J_{AB}=J_{BC}=J_{AC}=0.1$, the total system's battery capacity and the subsystem AB's capacity as the functions of the noise strength $a\,(0\sim1)$, before and after the measurement.} \label{Fig.3}
	 \end{figure}
	
	 For scheme 2, we use the projective measure $\{V_k\}_{k=0}^3=\{|00\rangle\langle00|,|01\rangle\langle01|,|10\rangle\langle10|,|11\rangle\langle11|\}$ on subsystem $BC$. Based on Eq.(\ref{e7}), we can obtain $\rho_a^{i,A}\,(i=0,1,2,3)$. In this case, $I_2\otimes V_i\,(i=0,1,2)$ are all optimal local projective operators. Therefore, we have
	 \begin{small}
	 \begin{equation*}
	 \mathcal{C}(\rho_a^{0,A};H_A)/\mathcal{C}(\rho_a^{A};H_A)=12/(4-a)>1.
	 \end{equation*}
	 \end{small}
	 In other words, the optimal local projective operator greatly enhances the battery capacity of subsystem $A$.
	
	 Now we study the influence of the optimal local projective operator under two schemes on the total system battery capacity. We use $r_i$ to represent the ratio of the battery capacity of the final state to that of the initial state in scheme $i$ ($i=1,2$) and study the impact of changes in interaction parameters $J_{AB}$, $J_{AC}$, $J_{BC}$, and noisy intensity $a$ on $r_1$ and $r_2$ (for example, when the value of $J_{AB}$ changes, $J_{BC}$ and $J_{AC}$ are fixed at $0.1$), see Figure 4.
	 \begin{figure*}[htpb]
	 	\centering
	 	\includegraphics[width=\textwidth,height=0.53\textwidth]{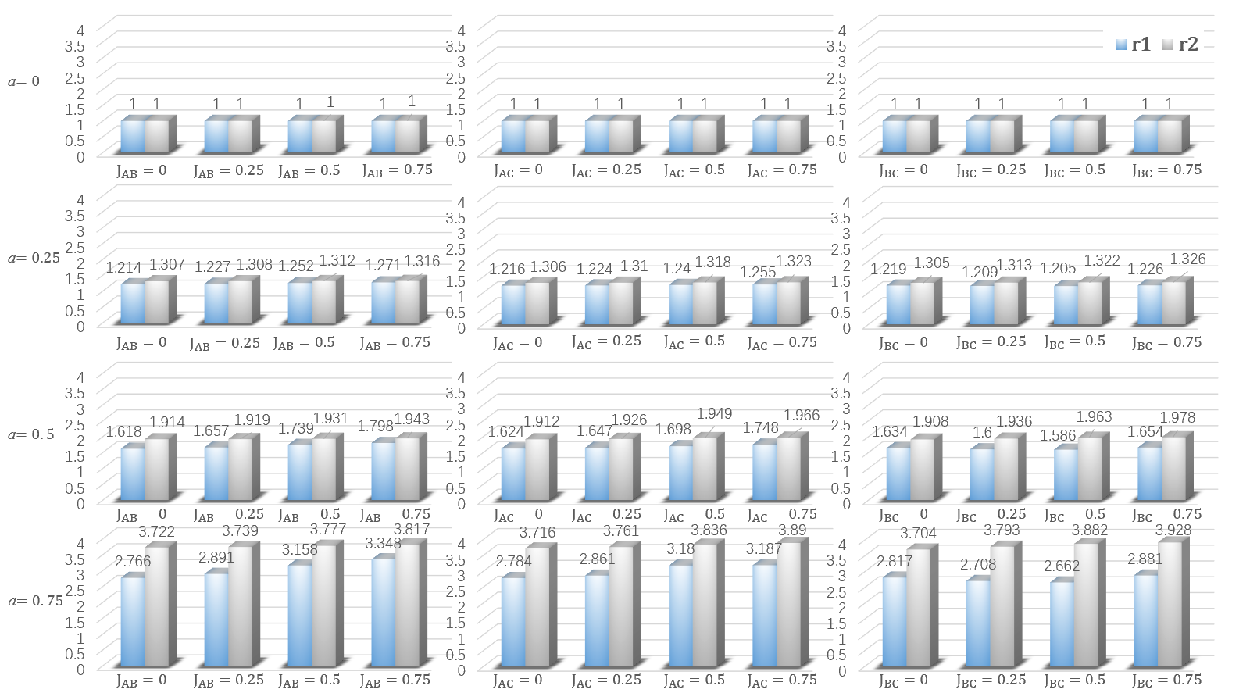}
	 	\vspace{-2em} \caption{Figure 4 contains 12 sub-plots arranged in a $3\times4$ grid. Sub-plots in the same row share identical noise strength ($a=0, 0.25, 0.5, 0.75$), while those in the same column share the same interaction strength configuration ($J_{XY}=0, 0.25, 0.5, 0.75$). For instance, the sub-plot in the second row and first column demonstrates how the varying $J_{AB}$ values ($0, 0.25, 0.5, 0.75$) affect the battery capacity enhancement ratios $r_1$ and $r_2$ under fixed parameters $a=0.25$ and $J_{BC}=J_{AC}=0.1$.} \label{Fig.4}
	 \end{figure*}
	 We find that no matter which interaction parameter increases, $r_2$ is always not lower than $r_1$. This means that the local-projective measurements on two subsystems are more effective than those on one subsystem in improving the entire battery capacity. The reason behind this fact is that the state obtained by using the optimal operator in scheme 1 has at most 4 non-zero eigenvalues, while the state corresponding to scheme 2 has at most 2 non-zero eigenvalues. As the trace of a density matrix is 1, mathematically the former is more easily majorized by the latter, and thus the latter has a larger battery capacity. Furthermore, when the noise intensity $a=0$, our two measurement schemes increase the battery capacity of subsystem $AB$ and subsystem $A$ respectively without reducing the whole system battery capacity. With the increase of noise intensity $a$, the efficiency $r_1$ and $r_2$ of the two schemes are improved. This phenomenon is reasonable, because the white noise weakens the capacity by increasing the number of non-zero eigenvalues of the battery state. However, both of our measurement schemes can reduce the number of non-zero eigenvalues to enhance the entire system battery capacity. In other words, our measurement schemes can weaken the negative impact of noise on the battery capacity, so they have anti-noise performance.	
\\	

 As the last example in this paper, let us consider the GHZ state used in quantum tasks \cite{kchk,lsbl,gjmg}.
	 \\
	
	 \mathbi{Example 2.} Let the initial state is three-qubit GHZ state $|GHZ\rangle=1/\sqrt{2}\sum_{i=0}^{1}|i\rangle^{\otimes 3}$ affected by the white noise,
	 \begin{equation*}
	 	\rho_b=(1-b)|GHZ\rangle\langle GHZ|+\frac{b}{8}\mathbbm{1},
	 \end{equation*}
	 where $0\leq b\leq1$. The whole system Hamiltonian we considered is
	 \begin{small}
	 	\begin{equation}\label{e12}
	 		\begin{split}
	 			H&=\epsilon^A\sigma_3\otimes I_2\otimes I_2+\epsilon^BI_2\otimes\sigma_3\otimes I_2+\epsilon^CI_2\otimes I_2\otimes \sigma_3\\
	 			&+J_{ABC}\sigma_1\otimes\sigma_1\otimes\sigma_1,
	 		\end{split}
	 	\end{equation}
	 \end{small}
	 and the Hamiltonian corresponding to subsystem $AB$ is
	 \begin{small}
	 	\begin{equation}\label{e13}
	 		H_{AB}=\epsilon^A\sigma_3\otimes I_2+\epsilon^BI_2\otimes\sigma_3,
	 	\end{equation}
	 \end{small}
	 where $\epsilon^A\geq\epsilon^B\geq\epsilon^C>0$ and $J_{ABC}\geq0$ is the interaction parameter between subsystem $A$, $B$ and $C$. The reason why Eqs.\,(\ref{e12}) and (\ref{e13}) are different from previous Eqs.\,(\ref{e9}) and (\ref{e10}) is that the quantum coherence of the three qubits $X$ state is shared by subsystem $A$, $B$ and $C$, which can be seen from the fact that $\rho_x^{AB}$, $\rho_x^{AC}$ and $\rho_x^{BC}$ are incoherent states.
	
	 From the analytical expressions in Appendix A, we can obtain that the eigenvalues of $\rho_b$ are $\lambda_i=b/8\,(i=0,\dots,6)$, $\lambda_7=1-7b/8$, and the eigenvalues of $\rho_b^{AB}$ are $\lambda_i^{AB}=b/4\,(i=0,1)$, $\lambda_j=(2-b)/4\,(j=2,3)$. As mention above, We also use $\{\epsilon_i\}_{i=0}^{7}$ to represent the eigenvalues of $H$, which depend on $\epsilon^A$, $\epsilon^B$, $\epsilon^C$ and the interaction parameter $J_{ABC}$. Thus the initial battery capacity of the entire system and subsystems are
	 \begin{small}
	 	\begin{equation*}
	 		\begin{split}
	 			&\mathcal{C}(\rho_b;H)=(1-b)\epsilon_7+(b-1)\epsilon_0,\\
	 			&\mathcal{C}(\rho_a^{AB};H_{AB})=2(1-b)\epsilon^A,\\
	 			&\mathcal{C}(\rho_a^A;H_A)=0.
	 		\end{split}
	 	\end{equation*}
	 \end{small}
	
	 We first implement scheme 1 to improve the battery capacity of subsystem $AB$. According to Corollary 1 in Appendix A, one can find that both $I_2\otimes I_2\otimes W_0$ and $I_2\otimes I_2\otimes W_1$ are optimal local projective operators in this case. Then using operator $I_2\otimes I_2\otimes W_0$ to implement scheme 1, we have $\mathcal{C}(\rho_b^{0,AB};H_{AB})=2(1-b)(\epsilon^A+\epsilon^B)>\mathcal{C}(\rho_b^{AB};H_{AB})$. This means that the battery capacity of the subsystem $AB$ have been enhanced.
	
	 For the scheme 2, it can be verified that the optimal local projective operator is $I_2\otimes V_0$ or $I_2\otimes V_3$ based on Corollary 2 in Appendix A. So we have $\mathcal{C}(\rho_b^{0,A};H_A)=\mathcal{C}(\rho_b^{3,A};H_A)=\frac{4-4b}{2-b}\epsilon^A$. That is to say, the optimal local projective operators in scheme 2 improve the battery capacity of subsystem $A$ in this situation.
	
	 Now let us compare scheme 1 and scheme 2 to investigate the influence of optimal local-projective operator on total system battery capacity under three measurement schemes. In this case, we can define the ratio of the improved value of the optimal local-projective operator in the $i$-th measurement scheme for the battery capacity to the rated battery capacity $\mathcal{C}(\rho_0;H$) as the capacity recovery rate $r_i^+$ and research the impact of changes in interaction parameter $J_{ABC}$ and noise parameter $b$ on capacity loss rate $r_l$ and capacity recovery rate $r_i^+\,(i=1,2)$, see Figure 5.
	 \begin{figure}[htbp]
	 	\centering
	 	\includegraphics[width=0.5\textwidth]{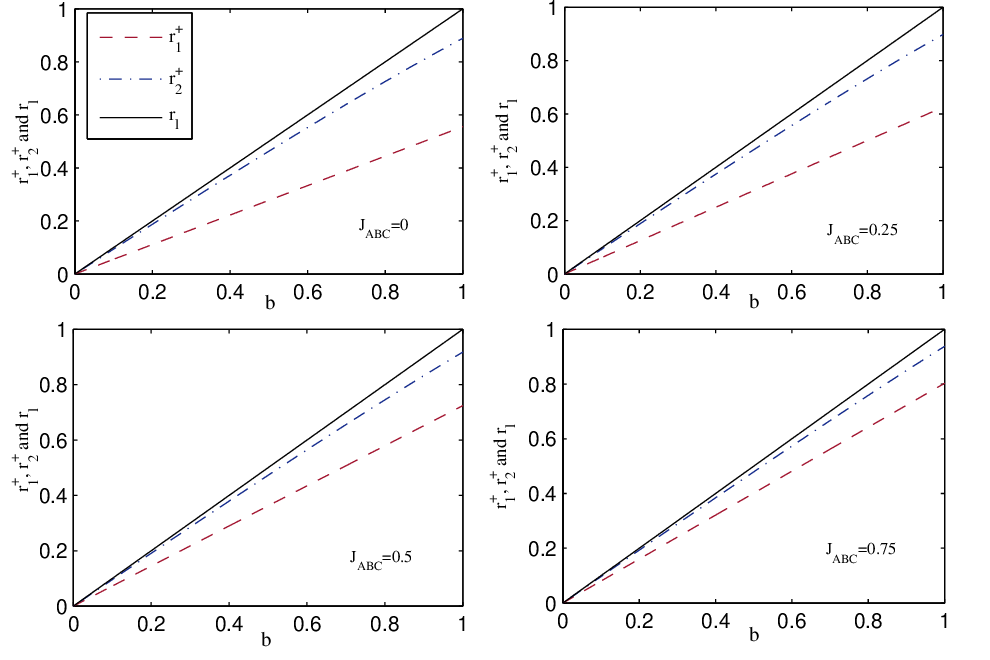}
	 	\vspace{-2em} \caption{The impact of changes in parameters $J_{ABC}\,(0,0.25,0.5,0.75)$ and $b\,(0\sim1)$ on $C_1$ and $C_2$ in Example 2.} \label{Fig.5}
	 \end{figure}
	
We observe that the capacity loss rate $r_l$ increase with the enhance of noise intensity $b$. In particular, when the noise intensity reaches the maximum $b=1$, the capacity loss rate $r_l=1$. These results are not difficult to understand mathematically: the monotonic increasing property of $r_l$ can be obtained by combining Eq.\,(\ref{e11}) and the monotonic decreasing property of battery capacity with respect to noise intensity from Theorem 3. When $b=1$, it can be verified that $\rho_1=\mathbbm{1}/8$ has a minimum battery capacity of $0$, and thus $r_l$ reaches the maximum. Furthermore, it can be found that the capacity recovery rate $r_2^+$ is always not lower than $r_1^+$. This means that although both of our measurement schemes enhance the battery capacity, scheme 2 shows better capacity recovery ability than scheme 1. The reasons for this phenomenon are as follows: when the noise parameter $b>0$, $\rho_b$ has $8$ non-zero eigenvalues. However, the local-projective measurement in scheme 1 and scheme 2 can reduce the number of its non-zero eigenvalues to $4$ and $2$, respectively. So $\rho_b$ is majorized by the post-measurement state in scheme 1, and the post-measurement state in scheme 1 is majorized by the counterpart in scheme 2. Therefore, the battery capacity of the post-measurement state in scheme 1 is greater than the $\rho_b$'s battery capacity, and the battery capacity of the post-measurement state in scheme 2 is greater than the counterpart in scheme 1 according to the fact that battery capacity is a Schur convex functional. In summary, the interference of white noise will weaken the battery capacity, but our measurement schemes can recovery part of the battery capacity. In addition, the above example also tells us that our schemes can enhance the battery capacity of the subsystem or the entire tripartite system, even if the initial state is not an entangled one.
	
	\section{IV. Conclusions and discussions}
	We have investigated the effect of local-projective measurements on the quantum battery capacity in tripartite systems by adopting two schemes based on the measured subsystems and introduced the concept of optimal local-projective operators for the tripartite states. Additionally, we have presented explicit analytical expressions for our schemes when applied to general three-qubit $X$-states. Moreover, we have studied the adverse effects of white noise and dephasing noise on battery capacity and proved that our measurement schemes can improve the robustness of target subsystem and total system battery capacity against both noise types for the general tripartite $X$-state. Finally, we have numerically validated the performance of different schemes in capacity enhancement through concrete examples and found that the optimal local-projective operators could be more effective in enhancing the battery capacity of both subsystem and total system.
	
	Our results show that the local von Neumann measurement is a useful tool to improve the battery capacity in multipartite quantum systems. In fact, both positive and non-positive measurements have been taken into account in energy extraction from quantum batteries \cite{cbss}. It would be interesting to further investigate how our scheme affects the battery capacity under the unitary time evolution of the entire system. The effects of more general positive operator-valued measurement\,(POVM) or nonlocal joint measurements on the enhancement of battery capacity in multipartite quantum systems are also worthwhile to be investigated.
	
	\bigskip
	{\bf Acknowledgments:} The authors thank the anonymous reviewers for their specific comments, which really improved the quality of this paper. This work is supported by the National Natural Science Foundation of China (NSFC) under Grant Nos. 12204137 and 12171044; the specific research fund of the Innovation Platform for Academicians of Hainan Province.

\appendix
\section{Appendix A: Three-qubit $X$ states}
Consider the initial state be the three-qubit $X$ states in the computational basis,
$$
\rho_x=\left(\begin{array}{cccccccc}
	\rho_{11} & 0 & 0 & 0 & 0 & 0 & 0 & \rho_{18} \\
	0 & \rho_{22} & 0 & 0 & 0 & 0 & \rho_{27} & 0 \\
	0 & 0 & \rho_{33} & 0 & 0 & \rho_{36} & 0 & 0 \\
	0 & 0 & 0 & \rho_{44} & \rho_{45} & 0 & 0 & 0 \\
	0 & 0 & 0 & \rho_{54} & \rho_{55} & 0 & 0 & 0 \\
	0 & 0 & \rho_{63} & 0 & 0 & \rho_{66} & 0 & 0 \\
	0 & \rho_{72} & 0 & 0 & 0 & 0 & \rho_{77} & 0 \\
	\rho_{81} & 0 & 0 & 0 & 0 & 0 & 0 & \rho_{88}
\end{array}\right),
$$
where $\sum_{i=1}^{8}\rho_{ii}=1$ and $\rho_{ii}\geq0$. The eigenvalues of $\rho_x$ are
\begin{equation*}
	\begin{split}
		 &\lambda_0=\frac{1}{2}(\rho_{11}+\rho_{88}-\sqrt{(\rho_{11}-\rho_{88})^2+4|\rho_{18}|^2}),\\
		 &\lambda_1=\frac{1}{2}(\rho_{11}+\rho_{88}+\sqrt{(\rho_{11}-\rho_{88})^2+4|\rho_{18}|^2}),\\
		 &\lambda_2=\frac{1}{2}(\rho_{22}+\rho_{77}-\sqrt{(\rho_{22}-\rho_{77})^2+4|\rho_{27}|^2}),\\
		 &\lambda_3=\frac{1}{2}(\rho_{22}+\rho_{77}+\sqrt{(\rho_{22}-\rho_{77})^2+4|\rho_{27}|^2}),\\
		 &\lambda_4=\frac{1}{2}(\rho_{33}+\rho_{66}-\sqrt{(\rho_{33}-\rho_{66})^2+4|\rho_{36}|^2}),\\
		 &\lambda_5=\frac{1}{2}(\rho_{33}+\rho_{66}+\sqrt{(\rho_{33}-\rho_{66})^2+4|\rho_{36}|^2}),\\
		 &\lambda_6=\frac{1}{2}(\rho_{44}+\rho_{55}-\sqrt{(\rho_{44}-\rho_{55})^2+4|\rho_{45}|^2}),\\
		 &\lambda_7=\frac{1}{2}(\rho_{44}+\rho_{55}+\sqrt{(\rho_{44}-\rho_{55})^2+4|\rho_{45}|^2}).
	\end{split}
\end{equation*}
Consider the following Hamiltonian of the whole system,
\begin{small}
	\begin{equation*}
	\begin{split}
	H_{ABC}&=\epsilon^A\sigma_3\otimes I_2\otimes I_2+\epsilon^BI_2\otimes\sigma_3\otimes I_2+\epsilon^CI_2\otimes I_2\otimes\sigma_3\\
	&+J_{ABC}\sigma_1\otimes\sigma_1\otimes\sigma_1,
	\end{split}
	\end{equation*}
\end{small}
where $\epsilon^A\geqslant\epsilon^B\geqslant\epsilon^C\geqslant0$ and $J_{ABC}\geq0$ is the interaction parameter between subsystems $A$, $B$ and $C$.

For the scheme 1, we perform the von Neumann measurement given by the computational bases $\{W_k=|k\rangle\langle k|, k=0,1\}$ on subsystem $C$. For simplicity, we can write $\rho_x$ in the Bloch representation,
\begin{equation*}
	\begin{split}
		\rho_x=&\frac{1}{8}(I_2\otimes I_2\otimes I_2+a_1\sigma_3\otimes I_2\otimes I_2+b_1I_2\otimes\sigma_3\otimes I_2\\
		&+c_1I_2\otimes I_2\otimes\sigma_3+a_2I_2\otimes\sigma_3\otimes\sigma_3+b_2\sigma_3\otimes I_2\otimes\sigma_3\\
		&+c_2\sigma_3\otimes\sigma_3\otimes I_2+a_3\sigma_1\otimes\sigma_2\otimes\sigma_2+b_3\sigma_2\otimes\sigma_1\otimes\sigma_2\\
		&+c_3\sigma_2\otimes\sigma_2\otimes\sigma_1
		+e_1\sigma_1\otimes\sigma_1\otimes\sigma_1+e_3\sigma_3\otimes\sigma_3\otimes\sigma_3),
	\end{split}
\end{equation*}
where
\begin{equation*}
	\begin{split}
		 &a_1=\rho_{11}+\rho_{22}+\rho_{33}+\rho_{44}-\rho_{55}-\rho_{66}-\rho_{77}-\rho_{88},\\
		 &b_1=\rho_{11}+\rho_{22}+\rho_{55}+\rho_{66}-\rho_{33}-\rho_{44}-\rho_{77}-\rho_{88},\\
		 &c_1=\rho_{11}+\rho_{33}+\rho_{55}+\rho_{77}-\rho_{22}-\rho_{44}-\rho_{66}-\rho_{88},\\
		 &a_2=\rho_{11}+\rho_{44}+\rho_{55}+\rho_{88}-\rho_{22}-\rho_{33}-\rho_{66}-\rho_{77},\\
		 &b_2=\rho_{11}+\rho_{33}+\rho_{66}+\rho_{88}-\rho_{22}-\rho_{44}-\rho_{55}-\rho_{77},\\
		 &c_2=\rho_{11}+\rho_{22}+\rho_{77}+\rho_{88}-\rho_{33}-\rho_{44}-\rho_{55}-\rho_{66},\\
		&a_3=2(\rho_{27}+\rho_{36}-\rho_{18}-\rho_{45}),\\
		&b_3=2(\rho_{27}+\rho_{45}-\rho_{18}-\rho_{36}),\\
		&c_3=2(\rho_{36}+\rho_{45}-\rho_{18}-\rho_{27}),\\
		&e_1=2(\rho_{27}+\rho_{36}+\rho_{18}+\rho_{45}),\\
		&e_3=\rho_{11}+\rho_{44}+\rho_{66}+\rho_{77}-\rho_{22}-\rho_{33}-\rho_{55}-\rho_{88}.
	\end{split}
\end{equation*}
Performing the projection measurement on system $C$, we obtain the measurement outcome ensemble $\{\rho_x^k,P_k\}_{k=0}^1$, where
\begin{equation*}
	\begin{split}
		P_k\rho_x^k=&(I_2\otimes I_2\otimes W_k)\rho_x(I_2\otimes I_2\otimes W_k)^\dagger\\
		=&\frac{1}{8}(I_2\otimes I_2\otimes W_k+a_1\sigma_3\otimes I_2\otimes W_k+b_1I_2\otimes\sigma_3\otimes W_k\\
		&+(-1)^kc_1I_2\otimes I_2\otimes W_k+(-1)^ka_2I_2\otimes\sigma_3\otimes W_k\\
		&+(-1)^kb_2\sigma_3\otimes I_2\otimes W_k+c_2\sigma_3\otimes\sigma_3\otimes W_k\\
		&+(-1)^ke_3\sigma_3\otimes\sigma_3\otimes W_k)
	\end{split}
\end{equation*}
with $P_0=\frac{1+c_1}{2}$ and $P_1=\frac{1-c_1}{2}$.
From Eqs.(\ref{e1}) and (\ref{e5}), one may calculate the battery capacity of the post-measurement state of the subsystem $AB$ and the entire tripartite system.

From Theorem 1, we can also consider how to choose a suitable local-projective operator for the general three-qubit $X$ state. In fact, $\rho_x^0$ and $\rho_x^1$ are both diagonal, with at most four non-zero elements (eigenvalues) each.
\begin{corollary}
	The local measurement $I_2\otimes I_2\otimes W_0$ corresponding to the state $\rho_x^0$ is the optimal local-projective measurement for $\rho_x$ to improve the battery capacity of subsystem $AB$, if the state $\rho_x^{1,AB}$ is majorized by $\rho_x^{0,AB}$, and vice versa.
\end{corollary}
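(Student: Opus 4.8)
The plan is to obtain Corollary~1 as the specialization of Theorem~1 to the three-qubit case $d_A=d_B=d_C=2$ with the computational-basis measurement $\{W_k=|k\rangle\langle k|\}_{k=0,1}$ on $C$; the only nonroutine ingredient is an explicit description of the two post-measurement reduced states on $AB$.

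First I would compute $\rho_x^{k,AB}=\mathrm{Tr}_C[\rho_x^k]$ directly from the Bloch decomposition of $\rho_x$ written above. Sandwiching $\rho_x$ by $I_2\otimes I_2\otimes W_k$ annihilates every term carrying $\sigma_1$ or $\sigma_2$ on the $C$ factor, because $\langle k|\sigma_1|k\rangle=\langle k|\sigma_2|k\rangle=0$, so only the terms with $I_2$ or $\sigma_3$ on $C$ survive --- this is exactly the expression for $P_k\rho_x^k$ displayed above. Taking the partial trace over $C$ (using $\mathrm{Tr}\,W_k=1$) then gives
\begin{equation*}
P_k\,\rho_x^{k,AB}=\frac{1}{8}\Bigl[(1+(-1)^kc_1)\,I_2\otimes I_2+(a_1+(-1)^kb_2)\,\sigma_3\otimes I_2+(b_1+(-1)^ka_2)\,I_2\otimes\sigma_3+(c_2+(-1)^ke_3)\,\sigma_3\otimes\sigma_3\Bigr],
\end{equation*}
which is diagonal in the computational basis; hence its (at most four) nonzero diagonal entries are precisely the eigenvalues of $\rho_x^{k,AB}$. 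This is the remark made just before the corollary, and it is what turns the abstract hypothesis ``$\rho_x^{1,AB}$ majorized by $\rho_x^{0,AB}$'' into a concrete comparison of two sorted four-tuples of real numbers.

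Next I would invoke Theorem~1. Since $d_C=2$, the post-measurement ensemble has exactly the two branches $\rho_x^0$ and $\rho_x^1$, so the hypothesis of Theorem~1 --- that every post-measurement reduced state on $AB$ be majorized by the one singled out --- reduces, when the branch $0$ is singled out, to the single requirement $\rho_x^{1,AB}\prec\rho_x^{0,AB}$. Under this requirement, Theorem~1 (equivalently, the Schur-convexity of $\mathcal{C}(\cdot\,;H_{AB})$ established in Ref.~\cite{yyas}) gives $\mathcal{C}(\rho_x^{1,AB};H_{AB})\le\mathcal{C}(\rho_x^{0,AB};H_{AB})$, so $I_2\otimes I_2\otimes W_0$ is the optimal local-projective operator. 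The ``and vice versa'' clause is the mirror statement obtained by interchanging the labels $0\leftrightarrow1$: if instead $\rho_x^{0,AB}\prec\rho_x^{1,AB}$, then applying Theorem~1 with the branch $1$ singled out shows that $I_2\otimes I_2\otimes W_1$ is optimal. The same computation applied to the whole-system post-measurement data yields the analogous optimality statement for $\mathcal{C}(\rho_x^k;H)$.

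The argument is essentially bookkeeping once Theorem~1 is in hand; the only step requiring genuine care --- and hence the main obstacle --- is the reduction over $C$, namely verifying that the projective measurement together with the partial trace annihilates exactly the coherence-carrying Bloch terms and leaves $\rho_x^{k,AB}$ diagonal. This is what makes the hypothesis of Theorem~1 a checkable condition (a comparison of four numbers) rather than an abstract majorization between noncommuting operators, and it is also what makes the dichotomy ``$W_0$ optimal or $W_1$ optimal'' exhaustive for this family of states.
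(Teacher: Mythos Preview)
Your proposal is correct and follows the same approach as the paper: the corollary is stated there without separate proof, immediately after the observation that $\rho_x^0$ and $\rho_x^1$ are diagonal with at most four nonzero eigenvalues, and is understood as the direct $d_C=2$ specialization of Theorem~1. Your explicit Bloch-form computation of $\rho_x^{k,AB}$ and the handling of the ``vice versa'' clause simply spell out what the paper leaves implicit.
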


Now we consider  the three-qubit $X$-state in scheme 2. After the measurements
$
\{V_k\}_{k=0}^3=\{|00\rangle\langle00|,|01\rangle\langle01|,|10\rangle\langle10|,|11\rangle\langle11|\}
$
on system $BC$, one has
\begin{small}
	\begin{equation*}
		\begin{split}
			P_0\rho_x^0&=(I_2\otimes V_0)\rho_x(I_2\otimes V_0)^\dagger\\
			&=(I_2\otimes W_0\otimes W_0)\rho_x(I_2\otimes W_0\otimes W_0)\\
			&=\frac{1}{8}(I_2\otimes W_0\otimes W_0+a_1\sigma_3\otimes W_0\otimes W_0+b_1I_2\otimes W_0\otimes W_0\\
			&+c_1I_2\otimes W_0\otimes W_0+a_2I_2\otimes W_0\otimes W_0+b_2\sigma_3\otimes W_0\otimes W_0\\
			&+c_2\sigma_3\otimes W_0\otimes W_0+e_3\sigma_3\otimes W_0\otimes W_0),
		\end{split}
	\end{equation*}
	\begin{equation*}
	\begin{split}
		P_1\rho_x^1&=(I_2\otimes V_1)\rho_x(I_2\otimes V_1)^\dagger\\
		&=(I_2\otimes W_0\otimes W_1)\rho_x(I_2\otimes W_0\otimes W_1)\\
		&=\frac{1}{8}(I_2\otimes W_0\otimes W_1+a_1\sigma_3\otimes W_0\otimes W_1+b_1I_2\otimes W_0\otimes W_1\\
		&-c_1I_2\otimes W_0\otimes W_1-a_2I_2\otimes W_0\otimes W_1-b_2\sigma_3\otimes W_0\otimes W_1\\
		&+c_2\sigma_3\otimes W_0\otimes W_1-e_3\sigma_3\otimes W_0\otimes W_1),
	\end{split}
\end{equation*}
\begin{equation*}
	\begin{split}
		P_2\rho_x^2&=(I_2\otimes V_2)\rho_x(I_2\otimes V_2)^\dagger\\
		&=(I_2\otimes W_1\otimes W_0)\rho_x(I_2\otimes W_1\otimes W_0)\\
		&=\frac{1}{8}(I_2\otimes W_1\otimes W_0+a_1\sigma_3\otimes W_1\otimes W_0-b_1I_2\otimes W_1\otimes W_0\\
		&+c_1I_2\otimes W_1\otimes W_0-a_2I_2\otimes W_1\otimes W_0+b_2\sigma_3\otimes W_1\otimes W_0\\
		&-c_2\sigma_3\otimes W_1\otimes W_0-e_3\sigma_3\otimes W_1\otimes W_0),
	\end{split}
\end{equation*}
\begin{equation*}
	\begin{split}
		P_3\rho_x^3&=(I_2\otimes V_3)\rho_x(I_2\otimes V_3)^\dagger\\
		&=(I_2\otimes W_1\otimes W_1)\rho_x(I_2\otimes W_1\otimes W_1)\\
		&=\frac{1}{8}(I_2\otimes W_1\otimes W_1+a_1\sigma_3\otimes W_1\otimes W_1-b_1I_2\otimes W_1\otimes W_1\\
		&-c_1I_2\otimes W_1\otimes W_1+a_2I_2\otimes W_1\otimes W_1-b_2\sigma_3\otimes W_1\otimes W_1\\
		&-c_2\sigma_3\otimes W_1\otimes W_1+e_3\sigma_3\otimes W_1\otimes W_1),
	\end{split}
\end{equation*}
\end{small}
where $W_i=|i\rangle\langle i|$, $i=0,1$, and the probabilities
\begin{small}
	\begin{equation*}
		\begin{split}
			&P_0=\frac{1+b_1+c_1+a_2}{4},~~P_1=\frac{1+b_1-c_1-a_2}{4},\\
			&P_2=\frac{1-b_1+c_1-a_2}{4},~~P_3=\frac{1-b_1-c_1+a_2}{4}.
		\end{split}
	\end{equation*}
\end{small}

Note that each $\rho_x^i$ $(i=0,1,2,3)$ has at most two non-zero eigenvalues. From Theorem 2, we have the following corollary related to the optimal local-projective operator in scheme 2.

\begin{corollary}
	The operator $I_2\otimes V_k\,(k\in\{0,1,2,3\})$ corresponding to the state $\rho_x^k$ is the optimal local-projective operator to improve the battery capacity of subsystem $A$, if the state $\rho_x^{i,A}\,(i=0,1,2,3)$ is majorized by $\rho_x^{k,A}$.
\end{corollary}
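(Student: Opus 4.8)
The plan is to read this corollary as the direct specialization of Theorem~2 to the three-qubit $X$-state, with $d_A=2$, $d_Bd_C=4$, and $\{V_k\}_{k=0}^3$ the four rank-one computational-basis projectors on $BC$; the only work is to verify that the objects appearing in Theorem~2 take the concrete form recorded in this Appendix. First I would fix the four post-measurement states $\rho_x^k$ whose Bloch expansions are displayed above and form their reduced states on $A$ by tracing out $BC$. Because every term in each $P_k\rho_x^k$ carries either $I_2$ or $\sigma_3$ on the first tensor factor, each reduced state $\rho_x^{k,A}$ is diagonal in the computational basis of $A$, and its two eigenvalues are read off by collecting the $I_2$ and $\sigma_3$ coefficients; for instance $\rho_x^{0,A}=\tfrac12 I_2+\tfrac{a_1+b_2+c_2+e_3}{2(1+b_1+c_1+a_2)}\sigma_3$, and the other three are obtained by the corresponding sign patterns. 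This confirms that the $\rho_x^{i,A}$ are bona fide qubit states to which the majorization hypothesis applies.

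Second, I would invoke the Schur-convexity of the capacity functional $\mathcal{C}(\cdot;H_A)$ established in Ref.~\cite{yyas}. The hypothesis that $\rho_x^{i,A}$ is majorized by $\rho_x^{k,A}$ for every $i\in\{0,1,2,3\}$ yields at once $\mathcal{C}(\rho_x^{i,A};H_A)\le\mathcal{C}(\rho_x^{k,A};H_A)$ for all $i$. By the definition of the optimal local-projective operator---the one improving the target subsystem's capacity at least as much as any other admissible operator in the scheme---the operator $I_2\otimes V_k$ attached to the distinguished outcome $k$ is therefore optimal, which is exactly the assertion of the corollary.

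There is essentially no genuine obstacle here: the argument is the Scheme-2 instance of the proof of Theorem~1, carried out with $A$ in place of $AB$ and four outcomes in place of the $d_C$ of Scheme~1. The only point deserving care is the verifiability of the hypothesis. Since each $\rho_x^{i,A}$ is a $2\times2$ state, majorization among the four reduced states is a total order governed solely by the larger eigenvalue: $\rho_x^{i,A}\prec\rho_x^{k,A}$ holds precisely when the maximal eigenvalue of $\rho_x^{i,A}$ does not exceed that of $\rho_x^{k,A}$. Hence identifying the optimal index $k$ reduces to a finite comparison of the four explicit eigenvalue pairs obtained in the first step, i.e.\ selecting the outcome whose reduced qubit state on $A$ is the most sharply peaked.
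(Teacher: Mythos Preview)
Your proposal is correct and follows essentially the same approach as the paper: Corollary~2 is stated there as an immediate specialization of Theorem~2 (which, like Theorem~1, rests solely on the Schur-convexity of $\mathcal{C}(\cdot;H_A)$), and the paper provides no further argument beyond that. Your additional explicit computation of the reduced states $\rho_x^{k,A}$ and the observation that qubit majorization is governed by the larger eigenvalue are useful elaborations, but they go beyond what the paper itself records.
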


\section{Appendix B: Proof of Theorem 3}
\setcounter{equation}{0}
\renewcommand{\theequation}{B\arabic{equation}}
	For an $n$-dimensional state $\rho$, let its eigenvalues be denoted as $\lambda_0\leq\lambda_2\leq\dots\leq\lambda_{n-1}$. So the eigenvalues of $(1-f)\rho+f\mathbbm{1}/n$ are
	\begin{small}
		\begin{equation*}
			(1-f)\lambda_0+f/n\leq(1-f)\lambda_1+f/n\leq\dots(1-f)\lambda_{n-1}+f/n.
		\end{equation*}
	\end{small}
	For any non-negative integer $k<n-1$,
	\begin{small}
		\begin{equation*}
			\begin{split}
				&\,\,\,\,\,\,\,\sum_{i=0}^{k}\lambda_{n-1-i}-\sum_{i=0}^{k}[(1-f)\lambda_{n-1-i}+f/n]\\
				&=f\sum_{i=0}^{k}\lambda_{n-1-i}-(k+1)f/n\\
				&=f(1-\sum_{i=0}^{n-k-2}\lambda_i-(k+1)/n)\\
				&\geq f(1-(n-k-1)/n-(k+1)/n)\\
				&=0.
			\end{split}
		\end{equation*}
	\end{small}
	This implies that $\rho\succ(1-f)\rho+f\mathbbm{1}/n$. Thus one have $\mathcal{C}(\rho;H)\geq\mathcal{C}((1-f)\rho+f\mathbbm{1}/n;H)$ based on the fact that the battery capacity is a Schur-convex functional of the quantum state. Now we prove monotonicity. Using $\{\epsilon_i\}_{i=0}^{n-1}$ to represent the energy levels of Hamiltonian $H$. According to Eq.\,(\ref{e1}), one have
	\begin{small}
		\begin{equation*}
			\begin{split}
				\frac{\partial\,\mathcal{C}((1-f)\rho+f\mathbbm{1}/n;H)}{\partial f}&=\frac{\partial\sum_{i=0}^{n-1}\epsilon_i(1-f)(\lambda_i-\lambda_{n-1-i})}{\partial f}\\
				&=\frac{\partial(1-f)\sum_{i=0}^{n-1}\epsilon_i(\lambda_i-\lambda_{n-1-i})}{\partial f}\\
				&=\frac{\partial(1-f)\mathcal{C}(\rho;H)}{\partial f}\\
				&=-\mathcal{C}(\rho;H)\\
				&\leq0.
			\end{split}
		\end{equation*}
	\end{small}
\section{Appendix C: Proof of Theorem 4}
\setcounter{equation}{0}
\renewcommand{\theequation}{C\arabic{equation}}
Before considering noise, we first show that the optimal local projection operator can enhance the battery capacity of any tripartite $X$-state subsystem and the total system.

Let $\rho$ be a given $X$-state in the Hilbert space $\mathcal{H}_{A_1}\otimes\mathcal{H}_{A_2}\otimes\mathcal{H}_{A_3}$\,(the dimension of $\mathcal{H}_{A_i}$ is $d_i$). Suppose we use the optimal local projective operator $M_1=(|1\rangle\langle1|)^{\otimes d_k\cdots d_3}$ on subsystem $\mathcal{H}_{A_k}\otimes\cdots\otimes\mathcal{H}_{A_3}$ without losing generality\,($1<k\leq 3$). In fact, each of the $d_k\cdots d_3$ measurement bases $M_1, M_2,\dots, M_{d_k\cdots d_3}$ corresponds to an array, which is composed of eigenvalues of the reduced density matrix of the post-measurement state relative to subsystem $\mathcal{H}_{A_1}\otimes\dots\otimes\mathcal{H}_{A_{k-1}}$ in descending order. The array corresponding to the operator $M_i$ is denoted as $N_i$, i.e.
\begin{equation}
N_i=\{\lambda_1^i,\lambda_2^i,...\lambda_{d_1\cdots d_{k-1}}^i\}.
\end{equation}
For $N_i$ and $N_j$ arrays, we say that $N_i$ is majorized by $N_j$\,($N_i\prec N_j$), if
\begin{equation}
\begin{split}
&\sum_{l=1}^{d^*}\lambda_l^i<\sum_{l=1}^{d^*}\lambda_l^j,\,1\leq d^*< d_1\cdots d_{k-1}.\\
&\sum_{l=1}^{d_1\cdots d_{k-1}}\lambda_l^i=\sum_{l=1}^{d_1\cdots d_{k-1}}\lambda_l^j.
\end{split}
\end{equation}
$M_1$ is optimal implies that
\begin{equation}
N_1\succ N_i\,(i=1,2,\dots,d_k\cdots d_3).
\end{equation}
To complete our proof, we first prove a lemma.
\begin{lemma}
For two $n^*$-ary arrays $\mathcal{N}_1$ and $\mathcal{N}_2$, if $\mathcal{N}_1\succ\mathcal{N}_2$, then
\begin{equation}
\mathcal{N}_1\succ (p\mathcal{N}_1+(1-p)\mathcal{N}_2)_+\succ(p\mathcal{N}_1+(1-p)\mathcal{N}_2)_{\Phi(n^*)},
\end{equation}
where $0\leq p\leq1$, $\Phi(n^*)$ is the rearrangement of $n^*$ elements out of order, and
\begin{equation*}
\begin{split}
&(p\mathcal{N}_1+(1-p)\mathcal{N}_2)_+=\{p\lambda_i^1+(1-p)\lambda_i^2\}_{i=1}^{n^*},\\
&(p\mathcal{N}_1+(1-p)\mathcal{N}_2)_{\Phi(n^*)}=\{p\lambda_i^1+(1-p)\lambda_{\Phi(i)}^2\}_{i=1}^{n^*}.
\end{split}
\end{equation*}
\end{lemma}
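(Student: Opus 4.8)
The plan is to prove the two majorization relations separately and then chain them by transitivity of $\prec$, in each case arguing directly from the definition in (C.2), i.e.\ by comparing the sums of the $k$ largest entries. Throughout I would write $\mathcal{N}_1=(a_1\ge\cdots\ge a_{n^*})$ and $\mathcal{N}_2=(b_1\ge\cdots\ge b_{n^*})$ and exploit that both arrays are listed in descending order.

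For the first relation $\mathcal{N}_1\succ(p\mathcal{N}_1+(1-p)\mathcal{N}_2)_+$, the key observation is that, since $p,1-p\ge0$, the entries $pa_i+(1-p)b_i$ of the aligned combination are themselves nonincreasing in $i$, so the sum of its $k$ largest entries is simply the prefix sum $p\sum_{i=1}^{k}a_i+(1-p)\sum_{i=1}^{k}b_i$. The hypothesis $\mathcal{N}_2\prec\mathcal{N}_1$ gives $\sum_{i=1}^{k}b_i\le\sum_{i=1}^{k}a_i$ for each $k<n^*$, while the full sums agree; this is exactly the asserted majorization. (Alternatively one could quote convexity of the permutohedron $\{x:x\prec\mathcal{N}_1\}$, which contains both $\mathcal{N}_1$ and $\mathcal{N}_2$ and hence their convex combination.)

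For the second relation $(p\mathcal{N}_1+(1-p)\mathcal{N}_2)_+\succ(p\mathcal{N}_1+(1-p)\mathcal{N}_2)_{\Phi(n^*)}$, I would use that for any vector $v$ the sum of its $k$ largest entries equals $\max_{|S|=k}\sum_{i\in S}v_i$, and that this functional is positively homogeneous and subadditive. Applied to $y_i:=pa_i+(1-p)b_{\Phi(i)}$, this bounds the sum of its $k$ largest entries by $p\max_{|S|=k}\sum_{i\in S}a_i+(1-p)\max_{|S|=k}\sum_{i\in S}b_{\Phi(i)}$. Since $\mathcal{N}_1$ is sorted the first maximum equals $\sum_{i=1}^{k}a_i$, and since $\Phi$ merely relabels $\mathcal{N}_2$ (also sorted) the second maximum equals $\sum_{i=1}^{k}b_i$; adding these reproduces the prefix sum of the aligned array, while the total sums coincide because $\Phi$ is a permutation.

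I do not anticipate a real obstacle: both relations collapse to a single inequality. The points needing care are bookkeeping — verifying that the ``$+$''-array is automatically nonincreasing (so that ``$k$ largest'' coincides with ``first $k$'' and the definition (C.2) applies verbatim), and noting that relabelling $\mathcal{N}_2$ by $\Phi$ leaves all of its $k$-largest-entry sums unchanged. Transitivity of majorization then yields the displayed chain, which is what the proof of Theorem 4 subsequently feeds (taking $\mathcal{N}_2$ to be the uniform array, modelling white noise) to conclude that the optimal projector's post-measurement state continues to majorize its competitors under noise.
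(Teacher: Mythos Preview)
Your argument is correct and, for the first relation $\mathcal{N}_1\succ(p\mathcal{N}_1+(1-p)\mathcal{N}_2)_+$, essentially identical to the paper's: both split the prefix sum linearly and invoke $\sum_{i\le k}b_i\le\sum_{i\le k}a_i$. For the second relation the paper simply asserts that ``disordered addition is majorized by ordered addition'' and moves on, whereas you actually supply a proof via the subadditivity of the top-$k$-sum functional $v\mapsto\max_{|S|=k}\sum_{i\in S}v_i$; this is a welcome clarification rather than a different strategy.

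One small inaccuracy in your closing remark: in the paper Lemma~1 is not fed with $\mathcal{N}_2$ equal to the uniform array. It is applied (iteratively) with the arrays $N_1,\dots,N_{d_k\cdots d_3}$ coming from the different measurement outcomes, to show that the pre-measurement reduced state's spectrum---a convex, disordered mixture of these arrays---is majorized by the optimal one $N_1$. The white-noise robustness is handled afterwards by a separate, direct computation on the shifted eigenvalue arrays, not via Lemma~1.
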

\begin{proof}
We assume that $\sum_{i=1}^{n^*}\lambda_i^1=\sum_{i=1}^{n^*}\lambda_i^2=1$ without losing generality. It is obvious that disordered addition is majorized by ordered addition, thus we only need to prove $\mathcal{N}_1\succ (p\mathcal{N}_1+(1-p)\mathcal{N}_2)_+$. Note that $\mathcal{N}_1\succ\mathcal{N}_2$, so one have
\begin{equation*}
\begin{split}
\sum_{i=1}^{k^*<n^*}p\lambda_i^1+(1-p)\lambda_i^2&=p\sum_{i=1}^{k^*<n^*}\lambda_i^1+(1-p)\sum_{i=1}^{k^*<n^*}\lambda_i^2\\
&\leq p\sum_{i=1}^{k^*<n^*}\lambda_i^1+(1-p)\sum_{i=1}^{k^*<n^*}\lambda_i^1\\
&=\sum_{i=1}^{k^*<n^*}\lambda_i^1,
\end{split}
\end{equation*}
and
$$\sum_{i=1}^{n^*}p\lambda_i^1+(1-p)\lambda_i^2=p+(1-p)=1=\sum_{i=1}^{n^*}\lambda_i^1.
$$
\end{proof}
Now continue to prove the main theorem. The reduced state of $\rho$ relative to subsystem $\mathcal{H}_{A_1}\otimes\dots\otimes\mathcal{H}_{A_{k-1}}$ is
\begin{widetext}
$$
\rho_{A_1\dots A_{k-1}}=\left(\begin{array}{cccc}
	\sum_{i=1}^{d_k\cdots d_3}\rho_{ii} & 0 & \cdots & 0 \\
	0 & \sum_{i=d_k\cdots d_3+1}^{2d_k\cdots d_3}\rho_{ii} & \cdots & 0 \\
	\vdots & \vdots & \ddots & \vdots \\
	0 & 0 & 0 & \sum_{i=(d_k\cdots d_3)(d_1\cdots d_{k-1}-1)+1}^{d_1d_2d_3}\rho_{ii}
\end{array}\right).
$$
\end{widetext}
It is not difficult to observe that for each item $\rho_{ii}$ in $\sum_{i=1}^{d_k\cdots d_3}\rho_{ii}$, $\rho_{ii}/p_i$ belongs to an array $N_i$, where
$$
p_i=\sum_{j=1}^{d_1\cdots d_{k-1}}\rho_{i+(j-1)d_k\cdots d_3,i+(j-1)d_k\cdots d_3}.
$$
 Therefore, the array $\mathcal{N}$ corresponding to the eigenvalues of $\rho_{A_1\dots A_{k-1}}$ is a disordered addition of $N_1, N_2,\dots, N_{d_k\dots d_3}$. That is to say,
\begin{equation}
\mathcal{N}=(\sum_{i=1}^{d_k\dots d_3}p_iN_i)_{\Phi(d_k\dots d_3)}.
\end{equation}
According to Lemma 1, we have $N_1\succ\mathcal{N}$. This means that
$$
\mathcal{C}(\rho^1_{A_1\dots A_{k-1}};H)\geq\mathcal{C}(\rho_{A_1\dots A_{k-1}};H),
$$
where $\rho^1=M_1\rho M_1^\dagger/\text{Tr}(M_1\rho M_1^\dagger)$.

Now we prove that the optimal local projective operator $M_1$ can enhance the whole system battery capacity. We assume that $\lambda_1^1=\rho_{11}/p_1$ without losing generality. For simplicity, we record $d_1d_2d_3=d$. So if $d$ is even, the eigenvalues of $d$-dimensional $X$-state can be expressed as $(\rho_{ii}+\rho_{d+1-i,d+1-i})/2\pm\sqrt{(\rho_{ii}-\rho_{d+1-i,d+1-i})^2+4|\rho_{i,d+1-i}|^2}/2,\,i=1,\dots d/2$. if $d$ is odd, then they are $(\rho_{ii}+\rho_{d+1-i,d+1-i})/2\pm\sqrt{(\rho_{ii}-\rho_{d+1-i,d+1-i})^2+4|\rho_{i,d+1-i}|^2}/2,\,i=1,\dots \lfloor d/2\rfloor$ and $\rho_{\lceil d/2\rceil,\lceil d/2\rceil}$. In fact, for any eigenvalue of $\rho$, we have
\begin{footnotesize}
\begin{equation*}
\begin{split}
	 &\frac{(\rho_{ii}+\rho_{d+1-i,d+1-i})\pm\sqrt{(\rho_{ii}-\rho_{d+1-i,d+1-i})^2+4|\rho_{i,d+1-i}|^2}}{2}\\
	 &\leq\frac{(\rho_{ii}+\rho_{d+1-i,d+1-i})+\sqrt{(\rho_{ii}-\rho_{d+1-i,d+1-i})^2+4|\rho_{i,d+1-i}|^2}}{2}\\
	&\leq\frac{(\rho_{ii}+\rho_{d+1-i,d+1-i})+\sqrt{(\rho_{ii}+\rho_{d+1-i,d+1-i})^2}}{2}\\
	&=\rho_{ii}+\rho_{d+1-i,d+1-i}.
\end{split}
\end{equation*}
\end{footnotesize}

Let the maximum eigenvalue of $\rho$ be $\lambda_1=(\rho_{22}+\rho_{d-1,d-1})/2+\sqrt{(\rho_{22}-\rho_{d-1,d-1})^2+4|\rho_{2,d-1}|^2}/2$ without losing generality. To prove $\lambda_1^1\geq\lambda_1$, we only need to prove $\lambda_1^1\geq\rho_{22}+\rho_{d-1,d-1}$.
\begin{small}
\begin{equation*}
	\begin{split}
		\lambda_1^1-\rho_{22}-\rho_{d-1,d-1}&=\frac{\rho_{11}}{p_1}-\rho_{22}-\rho_{d-1,d-1}\\
		&=\frac{\rho_{11}-\rho_{22}p_1-\rho_{d-1,d-1}p_1}{p_1}\\
		&=\frac{\rho_{11}\sum_{i=1}^{d}\rho_{ii}-\rho_{22}p_1-\rho_{d-1,d-1}p_1}{p_1}\\
		&=\frac{\rho_{11}\sum_{j=1}^{d_k\cdots d_3}p_j-\rho_{22}p_1-\rho_{d-1,d-1}p_1}{p_1}\\
		&\geq\frac{\rho_{11}p_2-\rho_{22}p_1+\rho_{11}p_{d_k\cdots d_3-1}-\rho_{d-1,d-1}p_1}{p_1}\\
		&\geq0.
	\end{split}
\end{equation*}
\end{small}
The last inequality is due to $\rho_{11}/p_1\geq\rho_{22}/p_2$ and $\rho_{11}/p_1\geq\rho_{d-1,d-1}/p_{d_k\cdots d_3-1}$, which come from Eq.\,(C3). For any $k^*\leq d_1\cdots d_{k-1}$, the first $k^*$ eigenvalues of $\rho$ in descending order correspond to a total of $2k^*$ non repeating diagonal elements, and at most $k^*$ of these diagonal elements belong to $M_l\rho M_l^\dagger\,(1\leq l\leq d_k\cdots d_3)$ at the same time. If such a situation exists, we record these $k^*$ elements as $\{\xi_i\}_{i=1}^{k^*}$. Then
\begin{small}
	\begin{equation*}
		\begin{split}
			 \sum_{i=1}^{k^*}\lambda_i^1-\xi_i&=\sum_{i=1}^{k^*}\lambda_i^1-\sum_{i=1}^{k^*}\xi_i\\
			&\geq\sum_{i=1}^{k^*}\lambda_i^1-\sum_{i=1}^{k^*}p_l\lambda_i^l\\
			&=\sum_{i=1}^{k^*}\lambda_i^1\sum_{j=1}^{d_k\cdots d_3}p_j-\sum_{i=1}^{k^*}p_l\lambda_i^l\\
			&\geq\sum_{i=1}^{k^*}p_l\lambda_i^1-\sum_{i=1}^{k^*}p_l\lambda_i^l\\
			&\geq0,
		\end{split}
	\end{equation*}
\end{small}
where the first inequality is due to $\xi_i/p_l\in N_l\,(i=1,\dots,k^*)$, and the last inequality follows $N_1\succ N_l$. From Eq.\,(C3), the sum of the residual $k^*$ diagonal elements must be bounded by $\sum_{i=1}^{k^*}\lambda_i^1\sum_{j=1}^{d_k\cdots d_3}(p_j-p_l)$. The above process is still used for other situations, because we only need to replace $k^*$ with any positive integer less than it. Therefore, we have proved that $\sum_{i=1}^{k^*}\lambda_i^1\geq\sum_{i=1}^{k^*}\lambda_i$, which implies that $\rho^1\succ\rho$. Based on the fact that quantum battery capacity is Schur-convex, one have $\mathcal{C}(\rho^1;H_\text{total})\geq\mathcal{C}(\rho;H_\text{total})$.

\subsection{white noise}
Now we prove that our optimal local projective operator can enhance the target subsystem and the whole system capacity robustness against white noise. In other words, for the $X$-state affected by white noise $\rho_f=(1-f)\rho+f\mathbbm{1}/d_1d_2d_3$ and noise strength $f\in[0,1]$,
\begin{equation*}
\begin{split}
\mathcal{C}((\rho^1)_{f,A_1\dots A_{k-1}};H)&\geq\mathcal{C}(\rho_{f,A_1\dots A_{k-1}};H),\\
\mathcal{C}((\rho^1)_f;H_\text{total})&\geq\mathcal{C}(\rho_f;H_\text{total}),
\end{split}
\end{equation*}
where $(\rho^1)_f=(1-f)\rho^1+f\mathbbm{1}/d_1\cdots d_3$, $\rho^1=M_1\rho M_1^\dagger/\text{Tr}(M_1\rho M_1^\dagger)$ and $M_1$ is the optimal local-projective operator.

According to the proof process in Appendix C, one have $\rho^1_{A_1\dots A_{k-1}}\succ\rho_{A_1\dots A_{k-1}}$ and $\rho^1\succ\rho$. For convenience, we denote the eigenvalue arrays corresponding to $\rho^1_{A_1\dots A_{k-1}}$ and $\rho_{A_1\dots A_{k-1}}$ as
\begin{equation}
\begin{split}
&N_1=\{\lambda_1, \lambda_2,\dots,\lambda_{d_1\dots d_{k-1}}\},\\
&\mathcal{N}=\{\mu_1, \mu_2,\dots,\mu_{d_1\dots d_{k-1}}\}.
\end{split}
\end{equation}
It can be verified that after considering white noise, their eigenvalue arrays are
\begin{small}
\begin{equation}
	\begin{split}
		&N_{f,1}=\{(1-f)\lambda_1+\frac{f}{c^*},\dots,(1-f)\lambda_{c^*}+\frac{f}{c^*}\},\\
		&\mathcal{N}_f=\{(1-f)\mu_1+\frac{f}{c^*},\dots,(1-f)\mu_{c^*}+\frac{f}{c^*}\},
	\end{split}
\end{equation}
\end{small}
where $c^*=d_1\dots d_{k-1}$. Then we have
\begin{equation*}
\begin{split}
\sum_{i=1}^{k^*<c^*}[(1-f)\mu_i+\frac{f}{c^*}]&=(1-f)\sum_{i=1}^{k^*<c^*}\mu_i+k^*\frac{f}{c^*}\\
&\leq(1-f)\sum_{i=1}^{k^*<c^*}\lambda_i+k^*\frac{f}{c^*}\\
&=\sum_{i=1}^{k^*<c^*}[(1-f)\lambda_i+\frac{f}{c^*}],
\end{split}
\end{equation*}
and
\begin{equation*}
	\begin{split}
		\sum_{i=1}^{c^*}[(1-f)\mu_i+\frac{f}{c^*}]&=(1-f)\sum_{i=1}^{c^*}\mu_i+f\\
		&=(1-f)+f\\
		&=(1-f)\sum_{i=1}^{c^*}\lambda_i+f,
	\end{split}
\end{equation*}
where we use the fact that the trace of the density matrix is $1$. Thus we have $N_{f,1}\succ\mathcal{N}_f$, which means that
$$
\mathcal{C}((\rho^1)_{f,A_1\dots A_{k-1}};H)\geq\mathcal{C}(\rho_{f,A_1\dots A_{k-1}};H).
$$
In a similar way, we can prove that
$$
\mathcal{C}((\rho^1)_f;H_\text{total})\geq\mathcal{C}(\rho_f;H_\text{total}).$$

\subsection{dephasing noise}

Perturbation-induced decoherence can be modeled by one parameter completely positive trace-preserving map
\begin{equation}
\Upsilon(\rho)=(1-\gamma)\rho+\gamma\triangle(\rho),\,\gamma\in[0,1],
\end{equation}
where $\triangle(\rho)=\sum_{i,j,k}|ijk\rangle\langle ijk|\rho|ijk\rangle\langle ijk|$ is the full dephasing channel which removes all off-diagonal elements in the computational basis. In fact, dephasing noise can reduce quantum battery capacity because the decoherent state of a quantum state is always majorized by itself \cite{wyd}, that is, $\triangle(\rho)\prec\rho$. According to Lemma 1, we have $\Upsilon(\rho)\prec\rho$.

As demonstrated in Appendix B, our optimal local projection operator can enhance battery capacity for both the target subsystem and the whole system, leading to the relation
\begin{equation*}
\begin{split}
\mathcal{C}(\Upsilon(\rho^1);H_\text{total})&=\mathcal{C}((1-\gamma)\rho^1+\gamma\triangle(\rho^1);H_\text{total})\\
&=\mathcal{C}((1-\gamma)\rho^1+\gamma\rho^1;H_\text{total})\\
&=\mathcal{C}(\rho^1;H_\text{total})\\
&\geq\mathcal{C}(\rho;H_\text{total})\\
&\geq\mathcal{C}(\Upsilon(\rho);H_\text{total}),
\end{split}
\end{equation*}
where the second equation follows the fact that the post-measurement state of an $X$-state becomes diagonal based on our schemes. Additionally, the third equation above implies that the battery capacity of post-measurement state remains unaffected by dephasing noise, exhibiting complete robustness against such decoherence processes. The conclusion for the target subsystem follows naturally from $\Upsilon(\rho^1)=\rho^1$.
\end{document}